\documentclass[showpacs,amsmath,amssymb,twocolumn,pra,superscriptaddress]{revtex4-1}

\usepackage[dvips]{graphicx} 
\usepackage{amsfonts}
\usepackage{amssymb}
\usepackage{amscd}
\usepackage{amsmath}    
\usepackage{enumerate}
\usepackage{epsfig}
\usepackage{bm}
\usepackage{xcolor}
\usepackage{amsthm}
\usepackage{framed}
\usepackage{multirow}
\usepackage{mathrsfs,amssymb}
\usepackage{latexsym} 
\usepackage{framed}
\usepackage{braket}
\usepackage{epstopdf}
\usepackage{verbatim}
\usepackage{pgf}

\newtheorem{theorem}{Theorem}
\newtheorem{lemma}{Lemma}
\newtheorem{definition}{Definition}

\newcommand{\setfigurewidth}[3]{%
  \pgfmathsetmacro{\figwidth}{#2/#3}
  \includegraphics[width=\figwidth\textwidth]{#1}
}
\begin{document}

\title{Numerical Security Analysis of Three-State Quantum Key Distribution Protocol with Realistic Devices}
\date{\today}

\begin{abstract}
Quantum key distribution (QKD) is a secure communication method that utilizes the principles of quantum mechanics to establish secret keys. The central task in the study of QKD is to prove security in the presence of an eavesdropper with unlimited computational power. In this work, we successfully solve a long-standing open question of the security analysis for the three-state QKD protocol with realistic devices, i,e, the weak coherent state source.
 We prove the existence of the squashing model for the measurement settings in the three-state protocol. This enables the reduction of measurement dimensionality, allowing for key rate computations using the numerical approach. We conduct numerical simulations to evaluate the key rate performance. The simulation results show that we achieve a communication distance of up to 200 km. 


\end{abstract}
\author{Sirui Peng}
\affiliation{State Key Lab of Processors, Institute of Computing Technology, Chinese Academy of Sciences, 100190, Beijing, China.}

\author{Xiaoming Sun}
\affiliation{State Key Lab of Processors, Institute of Computing Technology, Chinese Academy of Sciences, 100190, Beijing, China.}

\author{Hongyi Zhou}
\email{zhouhongyi@ict.ac.cn}
\affiliation{State Key Lab of Processors, Institute of Computing Technology, Chinese Academy of Sciences, 100190, Beijing, China.}

\maketitle

\section{Introduction}

Quantum key distribution (QKD) is one of the most promising secure communication methods in the era of quantum technologies. It is a central research direction in quantum information science and quantum cryptography \cite{scarani2009security,xu2020secure}, enabling communication partners, Alice and Bob, to share a secret key while being under the presence of an eavesdropper who has complete control over the quantum channel and infinite computational power. This concept is known as information-theoretic security. QKD leverages the unique properties of quantum mechanics, such as the no-cloning theorem and the uncertainty principle, to establish a shared secret key. This is a fundamental difference from classical cryptography, which relies on the computational complexity of mathematical problems for security.

Since the proposal of the first quantum key distribution protocol in 1984 \cite{BB84}, various protocols have been introduced to improve performance or simplify protocol designs. Among them, the three-state protocol \cite{PBC00} is particularly noteworthy. It extends the B92 protocol \cite{B92} by introducing a third non-orthogonal quantum state. These three quantum states form an equilateral triangle on the $X$-$Z$ plane of the Bloch sphere. Similar to B92 protocol, the measurement can only be described as a general positive operator-valued measure (POVM) rather than a random choice of projective measurements. Compared to the original B92 protocol, the three-state protocol allows for a broader range of eavesdropper detection and enhances the key rate performance. Currently, there is only an unconditional security proof for the three-state protocol with single-photon source settings \cite{Boileau2004UnconditionalSO}. However, implementing an ideal single-photon source in practical scenarios is quite challenging. Instead, weak coherent state sources are commonly employed as an approximation of an ideal single-photon source. Establishing rigorous security proof for coherent state source settings remains an open problem.

A viable approach to proving the security of QKD protocols with realistic devices is the numerical approach, initially proposed in \cite{Winick2018ReliableNK}. Compared to conventional analytical approaches, the numerical approach is more versatile for different protocol designs, particularly for protocols lacking symmetry \cite{coles2016numerical}. Key rate calculations are formulated as convex optimization problems, which can be efficiently solved. However, a major obstacle in the numerical approach is the dimensionality issue. In practical implementations of QKD protocols, the use of coherent state sources involves infinite-dimensional measurements, rendering computations intractable. The squashing model \cite{squashing} has been introduced to address this issue for certain protocols by reducing the real protocol with coherent state sources to a qubit-based protocol. However, the existence of the squashing model in the three-state protocol remains unknown.

In this work, we successfully solve the long-standing open question regarding the security analysis of the  three-state protocol with realistic devices. 
Through the utilization of the numerical approach, we have reformulated the security analysis of this protocol into a convex optimization problem. Our starting point was the well-studied scenario involving an ideal single-photon source. Our analysis reveals that higher bit error tolerance can be achieved compared to previous results obtained using analytical approaches.
Our main contribution lies in the establishment of a robust security proof for the case of a weak coherent state source. We have proven the existence of a squashing model, which allows us to reduce the measurement dimension and compute the key rate using the numerical approach. Considering a standard optical fiber loss of 0.2 dB/km, our result shows that the communication distance can achieve over 200 km. By leveraging the numerical approach and addressing the challenges associated with realistic implementations, we can further enhance the security, efficiency, and practicality of QKD protocols.


\section{Preliminary}
\subsection{Numerical approach of security analysis}
The central task of quantum key distribution is security analysis, where we need to concern about the maximum information leakage in the given protocol against any possible attack compatible with quantum mechanics. The security of the quantum key distribution protocol has a more detailed expression in terms of the key generation rate, i.e., the rate at which the communication parties can generate a shared secret key. We focus on the asymptotic security analysis throughout this paper, where the number of exchanged signals goes to infinity.
The asymptotic key rate formula is given by an optimization problem \cite{Devetak2005DistillationOS},
\begin{equation}\label{equ:keyrate}
\begin{aligned}
    K & = \left(\min _{\rho \in \mathbf{S}} f(\rho)\right)-p_{\text {pass}} \cdot \mathrm{leak}_{\mathrm{obs}}^{\mathrm{EC}} \\
      & =  \left(\min _{\rho \in \mathbf{S}} D(\mathcal{G}(\rho)\|\mathcal{Z}(\mathcal{G}(\rho)))\right)-p_{\text {pass}} \cdot \mathrm{leak}_{\mathrm{obs}}^{\mathrm{EC}} ,
\end{aligned}
\end{equation}
where $D(\cdot\|\cdot)$ is the quantum relative entropy function, $\mathcal{G}$ and $\mathcal{Z}$ are superoperators that respectively describe the post-selection and key-map process in the protocol (whose detailed introductions can be found in Appendix~\ref{sec:pcd}),
$p_{\text{pass}}$ denotes the probability for passing post-selection, and $\mathrm{leak}_{\mathrm{obs}}^{\mathrm{EC}}$ refers to bits revealed for error correction.

Next, we explain the variable $\rho$, which is a bipartite state defined from the entanglement-based protocol.
In the entanglement-based protocol, Alice prepares a bipartite state $\ket{\Psi}_{AA^\prime}$ in a single round,
\begin{equation}\label{equ:Alice-preparation-genreal}
\ket{\Psi}_{AA^\prime}=\sum_{j}\sqrt{p_j}\ket{j}_A\ket{\psi_j}_{A^\prime},
\end{equation}
where $\ket{\psi_j}$ stands for signal states and $\{p_j\}_j$ stands for its probability distribution. Then the measurement on system $A$ equals to a random preparation of $\ket{\psi_j}$.
Alice sends system $A^\prime$ to Bob through an untrusted channel $\mathcal{E}$, which contains both channel noise and possible eavesdropping. After the channel, the bipartite state becomes 
\begin{equation}\label{equ:channel-on-AAprime}
\rho_{AB} = (I\otimes\mathcal{E})(\rho_{AA^\prime}),
\end{equation}
which is exactly the variable in Eq.~\eqref{equ:keyrate}. For simplicity, we will neglect the subscript in the optimization problem in the following context. 

Finally, we explain the feasible set $\mathbf{S}$. In the entanglement-based protocol, Alice and Bob perform some joint measurements on $\rho_{AB}$. The corresponding observables are denoted as
 $\{\Gamma_i\}_{i}$ whose expectations are $\{\gamma_i\}_{i}$. The expectations together with some state preparation information such as the overlap between different signal states, serve as the constraints of the optimization problem. As we will see later, the state preparation information can also be expressed as expectations of some observables. The feasible set $\mathbf{S}$ is the set of the bipartite quantum state $\rho$ satisfying the constraints, i.e., $\mathbf{S}=\{\rho \succeq 0| \operatorname{Tr}\left(\Gamma_i \rho\right)=\gamma_i, \forall i\}$.
Then we can optimize $f(\rho)$ the further calculate the key rate by Eq.~\eqref{equ:keyrate}. 

To get reliable lower bounds of key rate, a two-phase optimization is applied in \cite{Winick2018ReliableNK}. In the first step, the Frank-Wolfe method is used to get a sub-optimal solution $\rho^\prime$, which provides an upper bound on key rate. The second step converts the upper bound into a lower bound by solving the dual problem of the linearization of function $f$ around the specified point $\rho^\prime$. Since the dual problem is a maximization problem, the lower bound remains reliable even if the numerical computation does not attain the global optimum.


\subsection{Symmetric three-state protocol}\label{sec:protocol}
We briefly review the symmetric three-state protocol proposed in \cite{PBC00}, which is known as the PBC00 protocol. In the protocol, the three quantum signal states are encoded as $\ket{\psi_j}=e^{\frac{2j\pi i}{3} \sigma_y}\ket{0}$ $(j\in\{0,1,2\})$ in the polarization of a single photon, where $\sigma_y$ is the Pauli $Y$ matrix. These three states form an equilateral triangle on the $X$-$Z$ plane. The prepare-and-measure PBC00 protocol works as follows,
\begin{enumerate}
    \item Alice generates a random bit $b\in\{0,1\}$ and a trit $r\in\{0,1,2\}$. Then she prepares the signal state $\ket{\psi_{f(r,b)}}$. The function $f(r,b)$ is defined in Table~\ref{tab:step1_mapping}.
    \begin{table}[htbp]
        \centering
        \begin{tabular}[t]{r|r r}
            \hline
             $f(r,b)$ & $b=0$       & $b=1$     \\
            \hline
             $r=0$    & $0$    & $1$  \\
             $r=1$    & $1$    & $2$  \\
             $r=2$    & $2$    & $0$  \\         
            \hline
        \end{tabular}
        \caption{Value table of function $f$.}
        \label{tab:step1_mapping}
    \end{table}
    \item Bob conducts a measurement on system $B$ (system $A^\prime$ through a noisy channel), which is characterized by a set of positive operator-valued measure (POVM) $\{P_k = 2\ket{\overline{\psi_k}}\bra{\overline{\psi_k}}/3\}_k$ $(k=0,1,2)$, where $\ket{\overline{\psi_k}}$ is the quantum state orthogonal to $\ket{\psi_k}$ on the $X$-$Z$ plane.
    \item After repeating steps 1 and 2 for $N_{\operatorname{tot}}$ rounds, Bob announces that all his measurements have finished. Then Alice announces the values of the random trit in each round.
    \item For each round, Bob maps measurement result $p$ and the announced trit $r$ to a secret key, which is characterized by a function $g(p,r)$, and announces the indices of inconclusive results. The function of $g(p,r)$ is defined in Table~\ref{tab:step4_mapping}.
    \begin{table}[htbp]
        \centering
        \begin{tabular}[t]{r|r r r}
            \hline
            $g(p,r)$ & $r=0$ & $r=1$ & $r=2$ \\
            \hline
            $p=0$   & $1$   & $X$   & $0$   \\
            $p=1$   & $0$   & $1$   & $X$   \\
            $p=2$   & $X$   & $0$   & $1$   \\
            \hline
        \end{tabular}
        \caption{Value table of function $g$, where the inconclusive measurement results in $g$ are marked as $X$.}
        \label{tab:step4_mapping}
    \end{table}
    \item Alice discards inconclusive bits in $b$ and selects some of string $b$ as test string (which does not affect key rate in asymptotic case). If estimated bit error rate is in an acceptable range, Alice and Bob perform error correction and privacy amplification to get the final secret key. 
\end{enumerate}
We focus on the asymptotic case where $N_{\operatorname{tot}}\rightarrow \infty$.
\section{Result}
In this section, we present the security analysis of the three-state protocol with settings given in Fig.~\ref{fig:loss_channel}.
To start with, we consider an ideal case where a single-photon source and ideal single-photon detectors are applied. We formulate the optimization problem by the numerical framework and simulate the key rate by assuming a depolarizing channel.
Next, we consider the case with realistic settings, where the phase-randomized weak coherent state source is applied as an approximation of the ideal single-photon source and detected by threshold detectors with dark counts. To overcome the problem of infinite dimensional signal state and measurement in numerical computation, we prove a squashing map exists for the measurements in the symmetric three-state protocol. Then again we formulate the optimization problem and simulate the key rate by assuming a loss channel.

\begin{figure*}[htbp]
    \centering
    \setfigurewidth{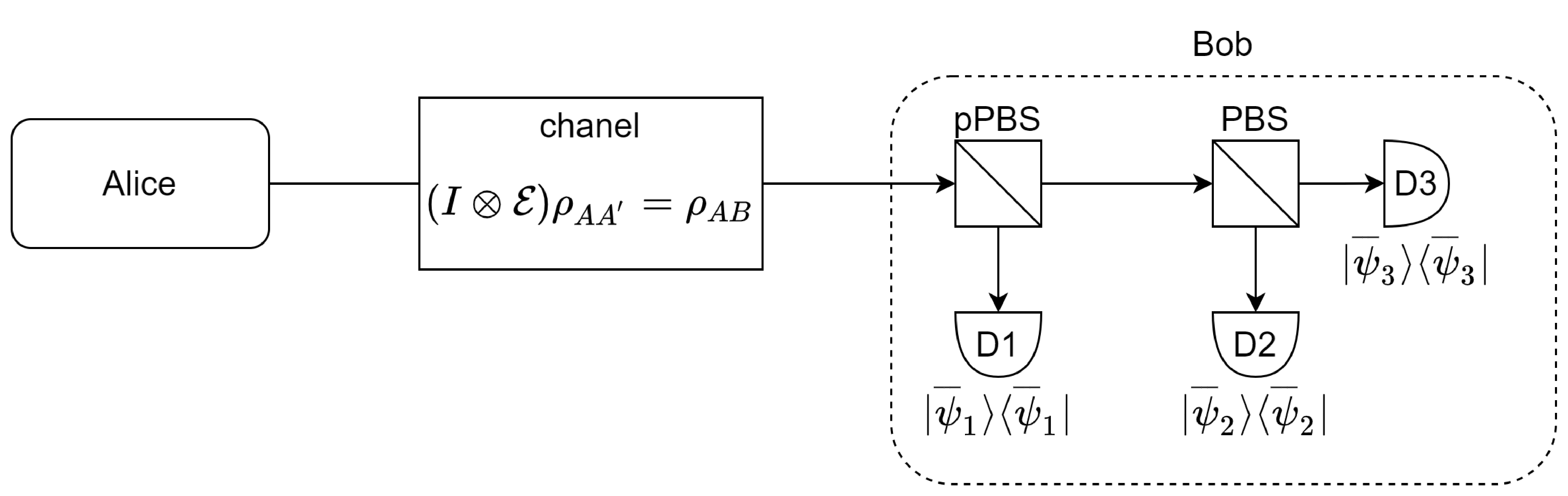}{2193}{3000}
    \caption{Overview of the three-state protocol. Bob's measurement settings are described by the POVM $\{P_k\}_k$. The click of the threshold detector $D_k$ corresponds to the POVM element $P_k$. PBS: polarization beam splitter. pPBS: partial polarization beam splitter.}
    \label{fig:loss_channel}
\end{figure*}

\subsection{Single-photon source case}
\subsubsection{Formulation of the optimization problem}
To formulate the optimization problem, we need to set the dimension of the variable $\rho$ and formulate the observables involved in the constraints. We consider the entanglement-based protocol where Alice prepares a bipartite entangled state $\ket{\Psi}_{AA^\prime}$,
\begin{equation}
\ket{\Psi}_{AA^\prime}=\sum_{r,b}\sqrt{\frac{1}{6}}\ket{r}_{A_1}\ket{b}_{A_2}\ket{\psi_j}_{A^\prime},
\end{equation}
In the description of the three-state protocol in Sec.~\ref{sec:protocol}, the quantum signal prepared in each round is determined by $r$ and $b$. Then random choice of $r$ and $b$ is equal to a measurement on two quantum registers in Hilbert space $\mathcal{H}_{A_1}$ and $\mathcal{H}_{A_2}$, respectively. They together form Alice's ancillary system $A$, i.e., $\mathcal{H}_{A}=\mathcal{H}_{A_1}\otimes \mathcal{H}_{A_2}$, with dimension $2\times3=6$. Since the quantum signal is a qubit, the dimension of $\rho$ is 12.

The constraints can be divided into two categories. One is that the expectations of the observables involved in the actual experiment should be compatible with the experimental statistics. In the entanglement-based protocol, the observables are described by a bipartite Hermitian operator $O_{j,k}$, $j\in\{0,1,2\}$. The observable $O_{j,k}$ corresponds to the event where Alice sends $j$-th signal state and Bob's measurement output is $k$,
\begin{equation}
 O_{j,k}=Q_j\otimes P_k,
\end{equation}
where 
\begin{equation}\label{constraint_union_probability}
\begin{aligned}
 &Q_0=\ket{00}\bra{00}_A+\ket{21}\bra{21}_A \\
& Q_1=\ket{01}\bra{01}_A+\ket{10}\bra{10}_A \\
& Q_2=\ket{11}\bra{11}_A+\ket{20}\bra{20}_A.
\end{aligned}    
\end{equation}
The other category of constraints reveals the information on state preparation,
\begin{equation}\label{constraint_inner_prod}
    \operatorname{Tr}_B(\rho_{AB})=\operatorname{Tr}_{A^\prime}(\rho_{AA^\prime}) =\rho_A,
\end{equation}
which is independent of Eve's attack. It can be rewritten in terms of expectations of some observables,
\small
\begin{equation}
\Theta_{r,b;r^\prime,b^\prime} = 
\begin{cases}
\frac{1}{2}(\ket{r,b}\bra{r^\prime,b^\prime}+\ket{r^\prime,b^\prime}\bra{r,b}), & 2r^\prime + b^\prime \leq 2r + b \\
\frac{i}{2}(\ket{r^\prime,b^\prime}\bra{r,b}-\ket{r,b}\bra{r^\prime,b^\prime}), & 2r^\prime + b^\prime > 2r + b
\end{cases},
\end{equation}
\normalsize
where $r,r^\prime\in\{0,1,2\},b,b^\prime\in\{0,1\}$.
Then we are able to compute the following optimization problem,
\begin{equation}
\begin{aligned}
\text{variable}\quad & \rho
\\
\min\quad & f(\rho)
\\
\mathrm{s.t.}\quad & \rho \succeq 0\\
& \operatorname{Tr}(O_{j,k}\rho) = o_{j,k} & \forall j,k
\\
& \operatorname{Tr}((\Theta_{r,b;r^\prime,b^\prime}\otimes I)\rho) = \theta_{r,b;r^\prime,b^\prime} & \forall r,r^\prime,b,b^\prime,
\end{aligned}
\end{equation}
where the variable is a $12\times 12$ positive semi-definite matrix, $j\in\{0,1,2\}$, $k\in\{0,1,2\}$, $r,r^\prime \in\{0,1,2\}$, and $b,b^\prime\in\{0,1\}$.
The expectations $\theta_{r,b;,r^\prime,b^\prime}$ are determined by the state preparation information,
\small
\begin{equation}
\theta_{r,b;r^\prime,b^\prime} = 
\begin{cases}
\frac{1}{6}\operatorname{Re}\braket{\psi_{f(r,b)}|\psi_{f(r^\prime,b^\prime)}}, & 2r^\prime + b^\prime \leq 2r + b \\
\frac{1}{6}\operatorname{Im}\braket{\psi_{f(r,b)}|\psi_{f(r^\prime,b^\prime)}}, & 2r^\prime + b^\prime > 2r + b
\end{cases},
\end{equation}
\normalsize
where $o_{j,k}$ is the probability that Bob's measurement result is $k$ and Alice sends $j$-th state, which should be obtained from experiment.

\subsubsection{Simulation result}
For the key rate calculation, we make a simulation of $o_{j,k}$ by assuming a depolarizing channel, which is described as
\begin{equation}
\mathcal{E}(\rho)=(1-p)\rho + p\frac{I}{d},
\end{equation}
where $d$ is the dimension of $\rho$ and $p$ is the parameter characterizing the depolarization.
With this channel model, we are able to simulate $o_{j,k}=\operatorname{Tr}((I\otimes \mathcal{E})(\rho_{AA^\prime})O_{j,k})$.


To calculate the final key rate in Eq.~\eqref{equ:keyrate}, we also need to simulate the probability of passing post-selection $p_{\text {pass}}$ and the bit error rate $e_{\text{bit}}$,
\begin{equation}
\begin{aligned}
p_{\text {pass}} &=\frac{1}{2}\sum_{i\neq j} \left( \frac{1-p}{3}\braket{\overline{\psi_j} | \psi_i}^2 + \frac{p}{6}\braket{\overline{\psi_j} | \psi_i} \right)
\\
e_{\text{bit}} &= \sum_{i} \left( \frac{1-p}{3}\braket{\overline{\psi_i} | \psi_i}^2 + \frac{p}{6}\braket{\overline{\psi_i} | \psi_i} \right)/p_{\text {pass}}.
\end{aligned}
\end{equation}
Our simulation result is shown in Figure \ref{fig:bachelor-4-1}. It turns out that the numerical approach can lead to a higher bit error tolerance compared with the analytical one \cite{Boileau2004UnconditionalSO}.

\begin{figure}[htbp]
    \centering
    \includegraphics[width=0.5\textwidth]{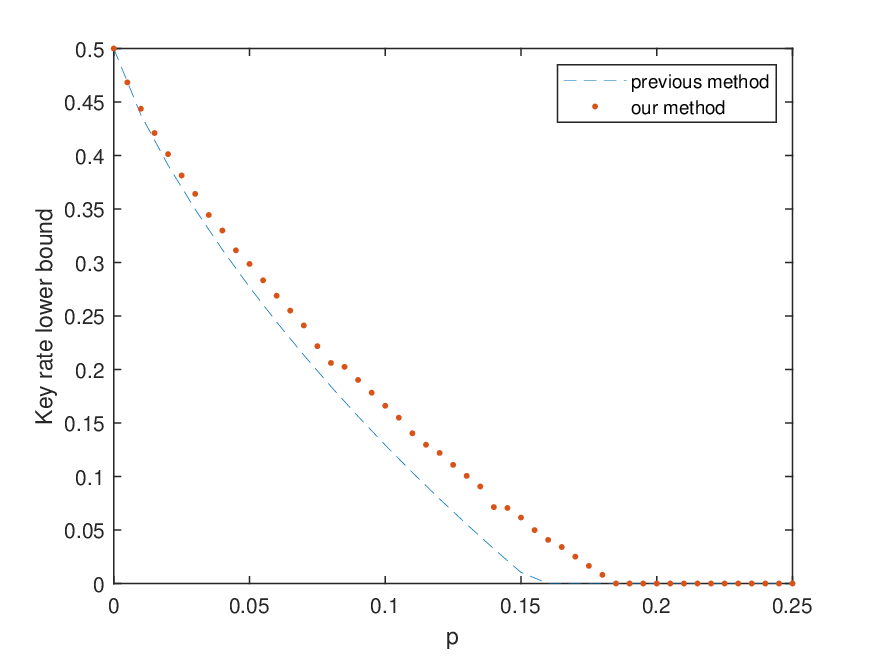}
    \caption{Simulation results on the symmetric three-state protocol's key rate versus the channel parameter $p$. Our results show that a non-zero key rate can be obtained for $p < 0.18$, corresponding to a quantum bit error rate of $0.113$, which outperforms the previous result ($p < 0.152$, $e_{\text{bit}} < 0.098$) \cite{Boileau2004UnconditionalSO}.}
    \label{fig:bachelor-4-1}
\end{figure}

\subsection{Coherent-state source case}
The original three-state protocol is established under the assumption of single-photon source. In this section, we consider the more practical weak coherent state source $\ket{\alpha}$, which is a superposition of Fock state $\ket{n}$, i.e., $\ket{\alpha}=e^{-|\alpha|^2/2}\sum_{n=0}^\infty \alpha^n\ket{n}/\sqrt{n!}$. After the phase randomization, it becomes a mixture of Fock states with a Poisson distribution $p_\mu(n)$,
\begin{equation}\label{equ:cohernet_source}
\rho_j(\mu) = \sum_{n=0}^\infty p_\mu(n)\ket{n}\bra{n} = \sum_{n=0}^\infty \frac{\mu^n e^{-\mu}}{n!}\ket{n}\bra{n},
\end{equation}
where $\mu=|\alpha|^2$.
In the polarization encoding, $\ket{n}$ refers to $n$ photons in the same polarization state $\ket{\psi_j}$ $(j\in\{0,1,2\})$.


\subsubsection{Squashing model}\label{sec:squashing-model}
The depiction of the coherent source as well as the measurement involves infinite dimensions, which makes a direct numerical computation impossible. Fortunately, the squashing model \cite{squashing} provides an approach to reduce the coherent state source scheme to qubit-based one. 
We define some notations: Bob's qubit measurement is characterized by a set of POVM $F_Q=\{F_Q^{(k)}\}_k$, and his measurement of threshold detectors on the practical signal is characterized by the basic measurement $F_B=\{F_B^{(k^\prime)}\}_{k^\prime}$. After some classical post-processing, $F_B$ is transformed into a coarse-grained full measurement $F_M=\{F_M^{(k)}\}_k$, such that the number of POVM elements in $F_M$ is the same as that of $F_Q$. The transformation of classical post-processing is represented by a stochastic matrix $\mathcal{P}$. The squashing model is described as follows.
\begin{figure*}[htbp]
    \centering
    \setfigurewidth{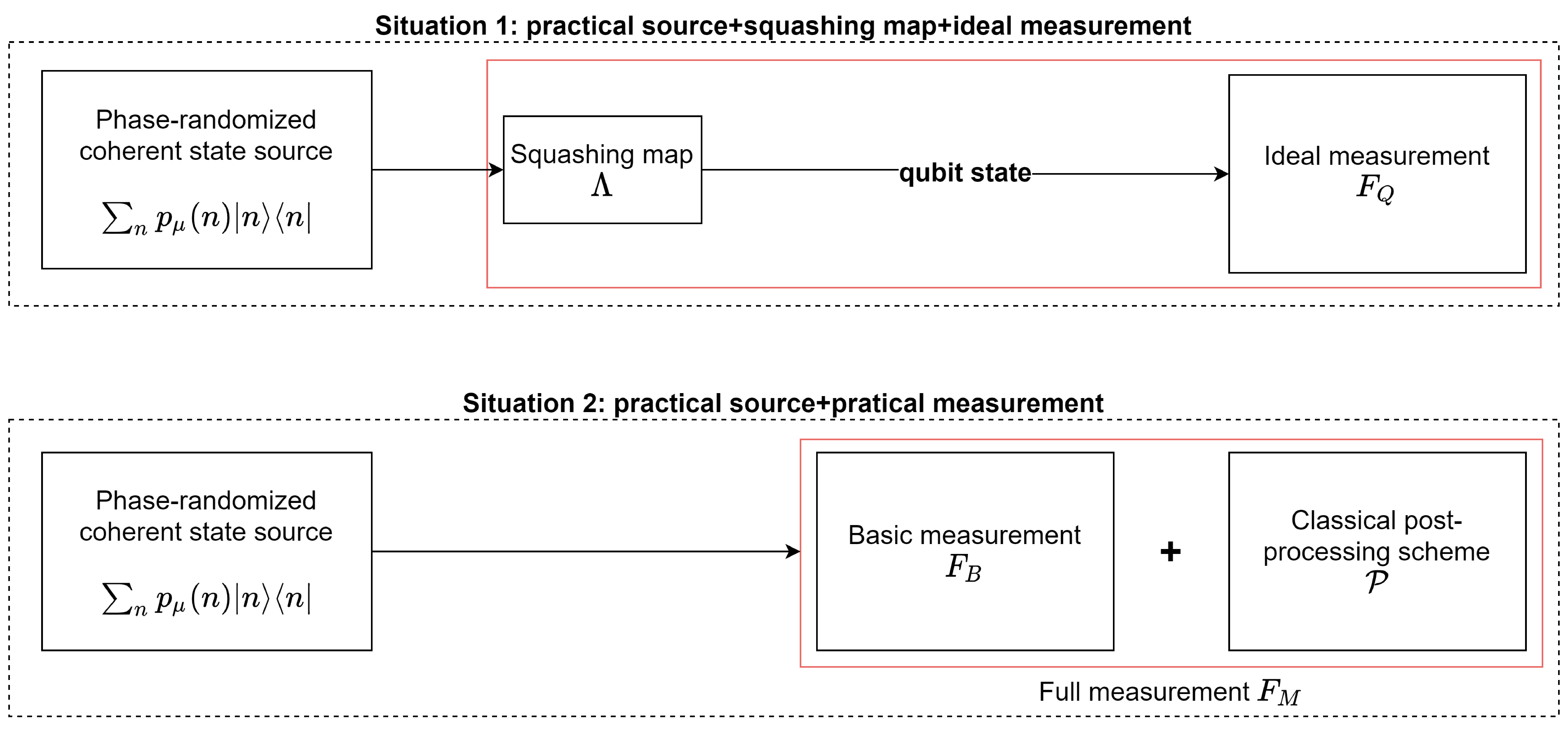}{3804}{4000}
    \caption{Schematics of the squashing model. The input state is first mapped to a qubit state, and then measured by an ideal qubit measurement. If a squashing model exists, the red frames in the two situations are equivalent. }
    \label{fig:squashing_model_scheme}
\end{figure*}

\begin{definition}
(Squashing model) Let $F_B=\{F_B^{(1)},F_B^{(2)}\cdots F_B^{(l^\prime)}\}$ and $F_Q=\{F_Q^{(1)},F_Q^{(2)}\cdots F_Q^{(l)}\}$ be the POVMs that describe the basic and target measurements in high dimensional and low dimensional Hilbert spaces, $\mathcal{H}_M$ and $\mathcal{H}_Q$, respectively. Let $\mathcal{P}$ be a stochastic matrix corresponding to the classical post-processing scheme, which leads to a full measurement POVM $F_M$ based on $F_B$. Then for a given $\mathcal{P}$, a squashing model for the measurement settings exists if there exists a map $\Lambda:\mathcal{H}_M\to\mathcal{H}_Q$ such that
\begin{enumerate}
    \item For any input state $\rho_M\in\mathcal{H}_M$,
\begin{equation}\label{equ:squashing-constraints}
\operatorname{Tr}\left(F_M^{(j)} \rho_M\right)=\operatorname{Tr}\left(F_Q^{(j)} \Lambda\left[\rho_M\right]\right), \forall j.
\end{equation}
    \item $\Lambda$ is a completely positive and trace-preserving (CPTP) map.
\end{enumerate}
\end{definition}
By the definition above, if there is such a map, the security analysis for coherent source protocol can be reduced to that for a qubit-based protocol.

For linear optical measurement device with threshold detectors, the squashing map can be decomposed with respect to the photon number subspaces,
\begin{equation}\label{eq:reduction1}
\Lambda[\rho] \stackrel{\text { QND }}{=}\Lambda\left[\bigoplus_{N=0}^{\infty} \rho_N\right]=\bigoplus_{N=0}^{\infty} \Lambda_N\left[\rho_N\right].
\end{equation}
This is because the basic measurement commutes with the QND measurement of the total photon number. Then we can consider a QND measurement on the input state $\rho_M$ first, which dephases $\rho_M$ into block-diagonal form in Fock basis, $\rho_M= \bigoplus_{N=0}^{\infty} \rho_{M,N}$. The constraints in Eq.~\eqref{equ:squashing-constraints} still holds for squashing map on $N$-photon subspace,
\begin{equation}
\operatorname{Tr}\left(F_{M,N}^{(j)} \rho_{M,N}\right)=\operatorname{Tr}\left(F_Q^{(j)} \Lambda_N\left[\rho_{M,N}\right]\right), \forall j.
\end{equation}
This virtual QND measurement allows us to split off the vacuum component both in dephased input states and measurements. We let $\Lambda_0(\rho_{M,0})=\ket{\text{vac}}\bra{\text{vac}}$, where $\ket{\text{vac}}$ is the vacuum state. That is to say, the squashing map always outputs a vacuum state when a zero-photon outcome is obtained in the QND measurement. Then Eq.~\eqref{eq:reduction1} is rewritten as
\begin{equation}
\Lambda[\rho]=\ket{\text{vac}}\bra{\text{vac}}\oplus\bigoplus_{N=1}^{\infty} \Lambda_N\left[\rho_N\right].
\end{equation}
Our following discussion is restricted to the subspace of $N\in \mathbb{N}_+$ photon number, and vacuum components in the target and full measurement are ignored.


To describe the squashing model in the symmetric three-state protocol, we introduce the following definitions and notations. We use $\ket{N_1,N_2,N_3}$ to denote a state where $N_i$ photons are in the polarization state $\ket{\psi_i}$.
Another notation $\ket{\bar N_1,\bar N_2,\bar N_3}$ is similarly defined, where there are $N_i$ of the photons in the polarization state $\ket{\overline{\psi_i}}$.
The target measurement $\textbf{F}_{Q} = [F_{Q}^{(1)},F_{Q}^{(2)},F_{Q}^{(3)}]^T$ becomes a qubit measurement after ignoring the vacuum component, which is equivalent to $P_k$,
\begin{equation}\label{equ:target_measurement}
\begin{aligned}
F_{Q}^{(1)} &= \frac{2}{3}\ket{\bar 1,\bar 0,\bar 0}\bra{\bar 1,\bar 0,\bar 0}
\\
F_{Q}^{(2)} &= \frac{2}{3}\ket{\bar 0,\bar 1,\bar 0}\bra{\bar 0,\bar 1,\bar 0}
\\
F_{Q}^{(3)} &= \frac{2}{3}\ket{\bar 0,\bar 0,\bar 1}\bra{\bar 0,\bar 0,\bar 1}.
\end{aligned}
\end{equation}
The basic measurement $\textbf{F}_{B}=[F_{B}^{(1)},F_{B}^{(2)},\dotsc,F_{B}^{(1,2,3)}]^T$ depends on the property of threshold detectors that can only distinguish zero and non-zero photons,
\small
{\allowdisplaybreaks
\begin{align}\label{equ:basic_measurement}
F_{B}^{(1)}&= \bigoplus_{N>0}\left(\frac{2}{3}\right)^N\ket{\bar N,\bar 0,\bar 0}\bra{\bar N,\bar 0,\bar 0} \nonumber
\\
F_{B}^{(2)}&= \bigoplus_{N>0}\left(\frac{2}{3}\right)^N\ket{\bar 0,\bar N,\bar 0}\bra{\bar 0,\bar N,\bar 0} \nonumber
\\
F_{B}^{(3)}&= \bigoplus_{N>0}\left(\frac{2}{3}\right)^N\ket{\bar 0,\bar 0,\bar N}\bra{\bar 0,\bar 0,\bar N}
\\
F_{B}^{(1,2)}&=\bigoplus_{N>0}\sum_{\substack{N_1,N_2>0 \\ N_1+N_2=N}} \left(\frac{2}{3}\right)^N\ket{\bar N_1,\bar N_2,\bar 0}\bra{\bar N_1,\bar N_2,\bar 0} \nonumber
\\
F_{B}^{(1,3)}&=\bigoplus_{N>0}\sum_{\substack{N_1,N_2>0 \\ N_1+N_2=N}} \left(\frac{2}{3}\right)^N\ket{\bar N_1,\bar 0,\bar N_2}\bra{\bar N_1,\bar 0,\bar N_2} \nonumber
\\
F_{B}^{(2,3)}&=\bigoplus_{N>0}\sum_{\substack{N_1,N_2>0 \\ N_1+N_2=N}} \left(\frac{2}{3}\right)^N\ket{\bar 0,\bar N_1,\bar N_2}\bra{\bar 0,\bar N_1,\bar N_2} \nonumber
\\
F_{B}^{(1,2,3)}&=I-\sum_{i=1,2,3} F^{(i)}_{B}-\sum_{\substack{i\neq j \\ i=1,2,3 \\j=1,2,3}} F^{(i,j)}_{B}. \nonumber
\end{align}
}
\normalsize
We choose a post-processing scheme that randomly distributes double-click and triple-click events to each measurement outcomes,
\begin{equation}\label{equ:full_measurement}
\begin{aligned}
F_{M}^{(1)}&=F_{B}^{(1)}+\frac{1}{3}(F_{B}^{(1,2)}+F_{B}^{(1,3)}+F_{B}^{(2,3)}+F_{B}^{(1,2,3)})
\\
F_{M}^{(2)}&=F_{B}^{(2)}+\frac{1}{3}(F_{B}^{(1,2)}+F_{B}^{(1,3)}+F_{B}^{(2,3)}+F_{B}^{(1,2,3)})
\\
F_{M}^{(3)}&=F_{B}^{(3)}+\frac{1}{3}(F_{B}^{(1,2)}+F_{B}^{(1,3)}+F_{B}^{(2,3)}+F_{B}^{(1,2,3)}).
\end{aligned}
\end{equation}
Combining the basic measurement with post-processing, we obtain the coarse-grained full measurement $\textbf{F}_{M}=[F_{M}^{(1)},F_{M}^{(2)},F_{M}^{(3)}]^T$. The vectorized POVM $\textbf{F}_{M}$ and $\textbf{F}_{B}$ are connected by a linear transformation,
\begin{equation}
\textbf{F}_{M} = \mathcal{P} \textbf{F}_{B},
\end{equation}
where the stochastic matrix $\mathcal{P}$ is
\begin{equation}\label{equ:stochastic-matrix}
\mathcal{P} = \begin{bmatrix}
1 & 0 & 0 &
\frac{1}{3} & \frac{1}{3} & \frac{1}{3} & \frac{1}{3}
\\
0 & 1 & 0 &
\frac{1}{3} & \frac{1}{3} & \frac{1}{3} & \frac{1}{3}
\\
0 & 0 & 1 &
\frac{1}{3} & \frac{1}{3} & \frac{1}{3} & \frac{1}{3}
\end{bmatrix}.
\end{equation}
By the definitions above, we prove that the squashing model does exist for the measurement settings of the symmetric three-state protocol.
\begin{theorem}\label{thm:squashing-model-existence}
The squashing model for the symmetric three-state protocol exists, i.e., for the given target measurement Eq.~\eqref{equ:target_measurement} and full measurement Eq.~\eqref{equ:full_measurement}, there exists a CPTP map $\Lambda$ such that 
\begin{equation}
\operatorname{Tr}(\rho F_M^{(k)}) =\operatorname{Tr}(\Lambda(\rho) F_Q^{(k)}), \forall \rho, k.
\end{equation}
\end{theorem}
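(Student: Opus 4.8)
The plan is to leverage the photon-number reduction already established in Eq.~\eqref{eq:reduction1}: since the basic measurement commutes with the total-photon-number QND measurement, the full measurement $F_M^{(k)}$ is block-diagonal in Fock number, so $\operatorname{Tr}(\rho F_M^{(k)})$ depends only on the photon-number-block-diagonal part of $\rho$. It therefore suffices to construct a CPTP map $\Lambda_N$ on each $N$-photon subspace (with $\Lambda_0$ fixed to output the vacuum) and to verify the squashing identity subspace by subspace. This decouples the infinite-dimensional problem into a family of finite-dimensional problems, one for each $N\in\mathbb{N}_+$.

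On each subspace I would pass to the Heisenberg picture: writing $\Lambda_N^\dagger$ for the adjoint, the identity $\operatorname{Tr}(\rho F_{M,N}^{(k)})=\operatorname{Tr}(\Lambda_N(\rho)F_Q^{(k)})$ for all $\rho$ is equivalent to the operator equations $\Lambda_N^\dagger(F_Q^{(k)})=F_{M,N}^{(k)}$ for all $k$, with $\Lambda_N$ CPTP iff $\Lambda_N^\dagger$ is completely positive and unital. Unitality is automatic: summing over $k$ and using $\sum_k F_Q^{(k)}=I$ together with $\sum_k F_{M,N}^{(k)}=I_N$ (from $\mathcal{P}$ being column-stochastic and the completeness of $F_B$) forces $\Lambda_N^\dagger(I)=I_N$. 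Equivalently, in the forward picture, since each $F_Q^{(k)}=\tfrac23\ket{\overline{\psi_k}}\bra{\overline{\psi_k}}$ is an in-plane qubit projector, I would parametrize the output $\Lambda_N(\rho)$ by its Bloch vector and note that the three target outcome probabilities fix only its $X$- and $Z$-components, as prescribed linear functionals of $\rho$ read off from $F_{M,N}^{(k)}$, while the $Y$-component is left free. The single-photon block is the trivial check $\Lambda_1=\mathrm{id}$, since there only single clicks occur and $F_{M,1}^{(k)}=F_Q^{(k)}$.

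To pin down and simplify the construction I would exploit the $\mathbb{Z}_3$ symmetry that cyclically permutes the three barred polarization directions, a symmetry shared by $F_{M,N}$ and $F_Q$; making $\Lambda_N$ covariant reduces the free data to a single parameter and guarantees the operator equations for all three $k$ once one is satisfied. It also helps to first rewrite the full measurement using $S:=\sum_i F_B^{(i)}$ and the stochasticity of $\mathcal{P}$, which collapses Eq.~\eqref{equ:full_measurement} to the compact form $F_{M,N}^{(k)}=F_{B,N}^{(k)}-\tfrac13 S_N+\tfrac13 I_N$, cleanly separating the pure-direction contribution from a direction-independent background.

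The hard part will be verifying complete positivity of $\Lambda_N^\dagger$ (equivalently, that the prescribed qubit-output map is a legitimate channel) simultaneously for every $N$. In the forward picture this reduces to showing that the operator-valued Bloch vector fixed by the prescribed $X$- and $Z$-functionals stays inside the Bloch ball, i.e.\ that the associated Choi matrix is positive semidefinite. I expect this to follow from the exponential damping $(2/3)^N$ of the multi-photon terms combined with the uniform $\tfrac13$ redistribution of double- and triple-click events, which shrink the induced Bloch vector and keep it admissible; the residual $Y$-freedom can be chosen to enforce positivity where needed. I would formalize this either by exhibiting explicit Kraus operators $K_m:\mathcal{H}_{M,N}\to\mathcal{H}_Q$ satisfying $\sum_m K_m^\dagger F_Q^{(k)}K_m=F_{M,N}^{(k)}$, or by direct diagonalization of the symmetry-reduced, low-rank Choi matrix. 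Establishing this positivity condition uniformly in $N$ is where the main effort lies.
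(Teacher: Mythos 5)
Your overall architecture matches the paper's opening moves: decompose by total photon number via the QND argument of Eq.~\eqref{eq:reduction1}, pass to the adjoint map so that the squashing identity becomes $\Lambda_N^\dagger(F_Q^{(k)})=F_{M,N}^{(k)}$, observe that only the $\sigma_y$-component of the output is left undetermined (since $\sigma_x,\sigma_z\in\mathrm{span}\{F_Q^{(k)}\}$ but $\sigma_y$ is not), and reduce everything to positivity of a Choi matrix. But you omit the second reduction the paper takes from \cite{squashing,gittsovich2014squashing}: because the post-processing distributes every multi-click event uniformly as $(\tfrac13,\tfrac13,\tfrac13)$, the $N$-photon map splits as $\Lambda_N=\Lambda_{P,N}+\Lambda_{P_\perp,N}$, where $P$ is the three-dimensional span of the single-click states $\ket{\bar N,\bar 0,\bar 0},\ket{\bar 0,\bar N,\bar 0},\ket{\bar 0,\bar 0,\bar N}$ and $\Lambda_{P_\perp,N}$ may simply output the maximally mixed qubit state. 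Your rewriting $F_{M,N}^{(k)}=F_{B,N}^{(k)}-\tfrac13 S_N+\tfrac13 I_N$ is precisely the observation that enables this split, but you stop short of drawing the consequence. Without it your positivity check lives on the full $N$-photon subspace, whose dimension grows with $N$, and the uniform-in-$N$ argument you hope for becomes substantially harder; with it, every $N$ reduces to a fixed $6\times 6$ Choi matrix.

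More importantly, the complete-positivity verification --- which you yourself flag as ``where the main effort lies'' --- is the actual content of the theorem, and you do not supply it. Saying you ``expect'' positivity to follow from the $(2/3)^N$ damping is not an argument, and the damping alone does not settle the small-$N$ cases: the paper must choose explicit nonzero values of the free $\sigma_y$-parameters and verify the minimum eigenvalue numerically for $N\in\{2,3,4,5\}$ (Lemma~\ref{lem:case_N_leq_5}), and only for $N>5$ does it set the free parameters to zero and control the Choi matrix by Gershgorin's circle theorem (Lemma~\ref{lem:case_N_greater_than_5}). A further technical point you would need: the single-click states are not orthogonal (their overlaps are $(\pm 1/2)^N$), so one must first justify testing positivity in that non-orthonormal basis, which the paper does with a separate lemma. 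Finally, your $\mathbb{Z}_3$-covariance ansatz for cutting the free data to one parameter is plausible but unverified; the paper's numerically found optima (e.g.\ $x=(-0.1093,0.1093,0.1093)$ at $N=3$) are not all of the covariant one-parameter form, so you would at least have to show a covariant solution exists. As it stands the proposal is a sound plan of attack with the decisive step missing.
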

We briefly introduce the sketch of the proof and leave the detailed proof in Appendix \ref{sec:smn}. We follow the two reductions in \cite{ squashing,gittsovich2014squashing} to simplify the proof. The first reduction is based on the observation of virtual QND measurement mentioned above. By this reduction, we only need to prove the existence of the squashing model in each photon-number subspace.
The second reduction is to further divide the $N$-photon subspace into the single-click subspace $\Lambda_{P,N}$ and its orthogonal compliment subspace. The proof is further reduced to finding a squashing map limited on $N$-photon single-click subspace. 
The general proof idea is to find the Choi matrix for the squashing map in each $N$-photon single-click subspace and then check whether the Choi matrix is positive definite. We discuss the cases of $N\leq 5$ and $N>5$. In the first case, there is a trivial identity map for $N=1$. We specify the squashing maps for $N\in\{2,3,4,5\}$ and verify them by numerically calculating the minimum eigenvalue. While in the second case, we notice that the off-diagonal terms of the Choi matrix of $\Lambda_{P,N}$ decay exponentially with $N$, i.e., the Choi matrix of $\Lambda_{P,N}$ converges to the identity matrix when $N\rightarrow \infty$. Then we prove the convergence by applying the technique of Gershgorin's circle \cite{Ger31}.
  

\subsubsection{Formulation of the optimization problem}
Compared with the ideal case, the only difference in the formulation lies in Bob's POVM which is denoted as $\{P^\prime_k\}_k$. Based on the vacuum flag structure of
the squashing map \cite{squashing}, an extra dimension is introduced in Bob's POVM space to include vacuum subspace, $P^\prime_k = 0 \oplus P_k$ $(k\in\{0,1,2\})$. There is also an additional POVM element corresponding to the no-click event, $P_3^\prime = I-\sum_{k=0}^2 P^\prime_k$. Then $O^\prime_{j,k} = Q_j\otimes P^\prime_k$.

According to \cite{Wang2022NumericalSP}, we consider the contribution from the single-photon component as a lower bound of the key rate,
\begin{equation}\label{eq:key_rate_coherent}
\begin{aligned}
K &= \left(\sum_{n=0}^{\infty} p_\mu(n) \min_{\rho^{(n)}\in S_n} D(\mathcal{G}(\rho^{(n)})\|\mathcal{Z}(\mathcal{G}(\rho^{(n)})))\right) \\
&\quad - p_{\mathrm{pass}} \cdot \mathrm{leak}_{\mathrm{obs}}^{\mathrm{EC}}
\\
&\geq p_\mu(1) \min_{\rho^{(1)}\in \textbf{S}_1} f(\rho^{(1)}) - p_{\mathrm{pass}} \cdot \mathrm{leak}_{\mathrm{obs}}^{\mathrm{EC}},
\end{aligned}
\end{equation}
where $\rho^{(n)}$ stands for $n$-photon component, $p_{\mu}(n)$ is its probability following the Poisson distribution, and $\textbf{S}_1$ is the feasible set of the single photon component $\rho^{(1)}$. To calculate Eq.~\eqref{eq:key_rate_coherent}, we formulate the following optimization problem whose feasible set is exactly $\textbf{S}_1$,
\begin{equation}\label{eq:coherent_opt}
\begin{aligned}
\text{variable}\quad & \rho
\\
\min\quad & f(\rho)
\\
\mathrm{s.t.}\quad &\rho \succeq 0 \\
& \operatorname{Tr}(O^\prime_{j,k}\rho) \in [o^L_{j,k,1},o^U_{j,k,1}] & \forall j,k
\\
& \operatorname{Tr}((\Theta_{r,b;r^\prime,b^\prime}\otimes I)\rho) = \theta_{r,b;r^\prime,b^\prime} & \forall r,r^\prime,b,b^\prime,
\end{aligned}
\end{equation}
where the variable is a $18\times 18$ positive semi-definite matrix, the indices $j\in\{0,1,2\}$, $k\in\{0,1,2,3\}$, $r,r^\prime \in\{0,1,2\}$, and $b,b^\prime\in\{0,1\}$. As a key difference from the single-photon source case, the equality constraints of the expectations are replaced with inequality ones.
We apply the decoy state method \cite{Lo2004DecoySQ} to estimate the bounds $o^L_{j,k,1}$ and $o^U_{j,k,1}$.
The idea of decoy state method is to express $o_{j,k,\mu}$, the probability of obtaining $k$ given $j$-th signal state with coherent state source of amplitude $\mu$, as a linear combination of $o_{j,k,n}$, i.e., statistics contributed by the $n$-photon component,
\begin{equation}
o_{j,k,\mu}=\sum_{n=0}^{\infty} p_{\mu}(n) o_{j,k,n}.
\end{equation}
Then one can estimate the single photon contribution $o_{j,k,1}$ by solving a linear program for finite decoy states.
In our case, the decoy state method is to solve the following linear program,
\small
\begin{equation}\label{eq:decoy_lp}
\begin{aligned}
\text{variable} \quad & o_{j,k,n}
\\
\max/\min \quad & o_{j,k,1} \\
\mathrm{s.t.} \quad & o_{j,k,n}\in [0,1]
&\forall n \\
& o_{j,k,\mu}\leq\sum_{n\leq N} p_{\mu}(n) o_{j,k,n} + (1 - \sum_{n\leq N}p_{\mu}(n))
&\forall \mu \\
& o_{j,k,\mu}\geq\sum_{n\leq N} p_{\mu}(n) o_{j,k,n}
&\forall \mu,
\end{aligned}
\end{equation}
\normalsize
In our calculation, for simplicity, we choose the photon number cut-off of $10$ and decoy states number of $2$, i.e., $\mu \in\{\mu_1,\mu_2, \mu_3\}$ where $\mu_1$ is the intensity of the signal state. 




\subsubsection{Simulation result}
The quantities we need to simulate are $o_{j,k,\mu}$. Then we can apply the decoy state method to estimate $o_{j,k,1}$ by Eq.~\eqref{eq:decoy_lp} and obtain its bounds $o^L_{j,k,1}$ and $o^U_{j,k,1}$ as constraints in Eq.~\eqref{eq:coherent_opt}. We consider a loss channel characterized by the transmittance $\eta$ for coherent state case, i.e., it transforms a coherent state $\ket{\alpha}$ to $\ket{\sqrt{\eta}\alpha}$. The detector dark count $p_d$ is taken into account too. When Alice sends the $j$-th signal state, the intensity of the incoming signal of the $k$-th detector $D_k$ is given by 
\begin{equation}
\mu_{jk} = \frac{2}{3}\eta\mu |\braket{\psi_j|\overline{\psi_k}}|^2.
\end{equation}
Then we can calculate the probability that $D_k$ clicks given that Alice sends $j$-th signal state, $p_{j,k,\mu}=1-e^{-\mu_{jk}}(1-p_d)$, and the expectations of basic measurement outcomes given the $j$-th signal state,
\begin{equation}
\begin{aligned}
o_{B;j,\overline{b_1b_2b_3},\mu} = \prod_{k=1,2,3} \left((2 b_k-1)p_{j,k,\mu}+(1-b_k)\right) \\
b_1,b_2,b_3\in\{0,1\},\overline{b_1b_2b_3}\neq 0,
\end{aligned}
\end{equation}
where we use the binary expression $\overline{b_1b_2b_3}$ to denote the value  $2^2b_1+2^1b_2+2^0b_3$.
Our post-processing scheme Eq.~\eqref{equ:stochastic-matrix} connects it to the expectations of full measurement outcomes,
 \begin{equation}
o_{j,k,\mu} = \sum_{\substack{b_1,b_2,b_3\in\{0,1\}, \\ \overline{b_1b_2b_3}\neq 0}} \mathcal{P}_{k,\overline{b_1b_2b_3}}o_{B;j,\overline{b_1b_2b_3},\mu}.
 \end{equation}
To calculate the final key, we also need to bound the sifting probability and error rate,
\begin{equation}
\begin{aligned}
p_{\text {pass}}\geq p^L_{\text {pass}}
&= \frac{1}{2}(o_{B;0,\overline{100},\mu_1}+o_{B;0,\overline{010},\mu_1}) \\
&+\frac{1}{2}(o_{B;0,\overline{001},\mu_1}+o_{B;0,\overline{100},\mu_1}) \\
&+ \frac{1}{2}(o_{B;1,\overline{010},\mu_1}+o_{B;1,\overline{001},\mu_1}) \\
&+\frac{1}{2}(o_{B;1,\overline{100},\mu_1}+o_{B;1,\overline{010},\mu_1}) \\
&+ \frac{1}{2}(o_{B;2,\overline{001},\mu_1}+o_{B;2,\overline{100},\mu_1}) \\
&+\frac{1}{2}(o_{B;2,\overline{010},\mu_1}+o_{B;2,\overline{001},\mu_1}),
\end{aligned}
\end{equation}
and
\begin{equation}
\begin{aligned}
e_{\text{bit}}\leq e_{\text{bit}}^U= (\sum_j \sum_{\substack{b_1,b_2,b_3\in\{0,1\}, \\ b_j=1}} o_{B;j,\overline{b_1b_2b_3},\mu_1})/p_{\text{pass}}.
\end{aligned}
\end{equation}
We simulate the key rate versus the transmission distance in Fig.~\ref{fig:coherent1}. The maximum transmission distance can achieve 200 km, which is comparable to BB84 protocol.
\begin{figure}[htbp]
    \centering
    \includegraphics[width=0.5\textwidth]{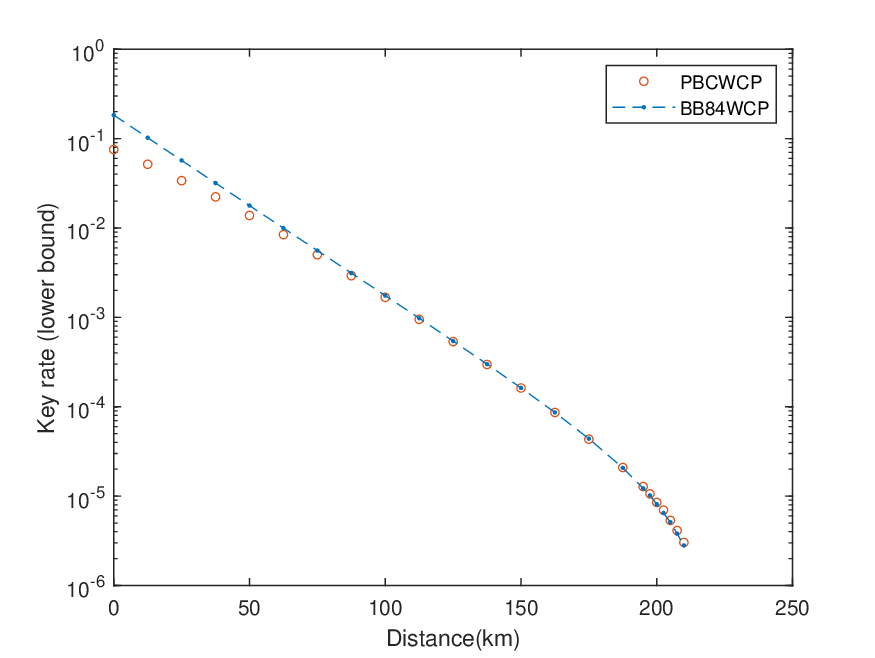}
    \caption{Simulation result of the key rate versus transmission distance for decoy-state PBC00 protocol  and BB84 protocol under the same settings. The fiber loss is set as $0.2$dB/km, and the signal and decoy state intensities are optimized in the range of $\mu_1\in(0.001,1],\mu_2=(0.001,\mu_1],\mu_3\in\{0.001\}$. The probability of dark count is $p_d=10^{-6}$.}
    \label{fig:coherent1}
\end{figure}

\section{Conclusion}
In conclusion, by employing the numerical approach, we have provided rigorous security proof for both single-photon source and coherent state source settings in the PBC00 protocol. In the single-photon source case, our analysis achieves higher bit error tolerance compared to previous results obtained through analytical approaches.

In the coherent state source case, we have successfully proved the existence of the squashing model in the PBC00 protocol. This finding has significant implications not only for the PBC00 protocol but also for other quantum communication protocols involving threshold detectors. The squashing model offers a practical solution to reduce the dimensionality of measurements, enabling efficient computation of the key rate. Our method can be further applied to various prepare-and-measure QKD protocol with general measurements, such as the protocols with passive measurement basis choice.

\section*{Acknowledgement}
This work was supported in part by the National Natural Science Foundation of China Grants No. 61832003, 62272441, 12204489, and the Strategic Priority Research Program of Chinese Academy of Sciences Grant No. XDB28000000.
\bibliographystyle{apsrev4-1}

\bibliography{pbc00ref}

\begin{thebibliography}{14}%
\makeatletter
\providecommand \@ifxundefined [1]{%
 \@ifx{#1\undefined}
}%
\providecommand \@ifnum [1]{%
 \ifnum #1\expandafter \@firstoftwo
 \else \expandafter \@secondoftwo
 \fi
}%
\providecommand \@ifx [1]{%
 \ifx #1\expandafter \@firstoftwo
 \else \expandafter \@secondoftwo
 \fi
}%
\providecommand \natexlab [1]{#1}%
\providecommand \enquote  [1]{``#1''}%
\providecommand \bibnamefont  [1]{#1}%
\providecommand \bibfnamefont [1]{#1}%
\providecommand \citenamefont [1]{#1}%
\providecommand \href@noop [0]{\@secondoftwo}%
\providecommand \href [0]{\begingroup \@sanitize@url \@href}%
\providecommand \@href[1]{\@@startlink{#1}\@@href}%
\providecommand \@@href[1]{\endgroup#1\@@endlink}%
\providecommand \@sanitize@url [0]{\catcode `\\12\catcode `\$12\catcode
  `\&12\catcode `\#12\catcode `\^12\catcode `\_12\catcode `\%12\relax}%
\providecommand \@@startlink[1]{}%
\providecommand \@@endlink[0]{}%
\providecommand \url  [0]{\begingroup\@sanitize@url \@url }%
\providecommand \@url [1]{\endgroup\@href {#1}{\urlprefix }}%
\providecommand \urlprefix  [0]{URL }%
\providecommand \Eprint [0]{\href }%
\providecommand \doibase [0]{http://dx.doi.org/}%
\providecommand \selectlanguage [0]{\@gobble}%
\providecommand \bibinfo  [0]{\@secondoftwo}%
\providecommand \bibfield  [0]{\@secondoftwo}%
\providecommand \translation [1]{[#1]}%
\providecommand \BibitemOpen [0]{}%
\providecommand \bibitemStop [0]{}%
\providecommand \bibitemNoStop [0]{.\EOS\space}%
\providecommand \EOS [0]{\spacefactor3000\relax}%
\providecommand \BibitemShut  [1]{\csname bibitem#1\endcsname}%
\let\auto@bib@innerbib\@empty
\bibitem [{\citenamefont {Scarani}\ \emph {et~al.}(2009)\citenamefont
  {Scarani}, \citenamefont {Bechmann-Pasquinucci}, \citenamefont {Cerf},
  \citenamefont {Du{\v{s}}ek}, \citenamefont {L{\"u}tkenhaus},\ and\
  \citenamefont {Peev}}]{scarani2009security}%
  \BibitemOpen
  \bibfield  {author} {\bibinfo {author} {\bibfnamefont {V.}~\bibnamefont
  {Scarani}}, \bibinfo {author} {\bibfnamefont {H.}~\bibnamefont
  {Bechmann-Pasquinucci}}, \bibinfo {author} {\bibfnamefont {N.~J.}\
  \bibnamefont {Cerf}}, \bibinfo {author} {\bibfnamefont {M.}~\bibnamefont
  {Du{\v{s}}ek}}, \bibinfo {author} {\bibfnamefont {N.}~\bibnamefont
  {L{\"u}tkenhaus}}, \ and\ \bibinfo {author} {\bibfnamefont {M.}~\bibnamefont
  {Peev}},\ }\href@noop {} {\bibfield  {journal} {\bibinfo  {journal} {Reviews
  of modern physics}\ }\textbf {\bibinfo {volume} {81}},\ \bibinfo {pages}
  {1301} (\bibinfo {year} {2009})}\BibitemShut {NoStop}%
\bibitem [{\citenamefont {Xu}\ \emph {et~al.}(2020)\citenamefont {Xu},
  \citenamefont {Ma}, \citenamefont {Zhang}, \citenamefont {Lo},\ and\
  \citenamefont {Pan}}]{xu2020secure}%
  \BibitemOpen
  \bibfield  {author} {\bibinfo {author} {\bibfnamefont {F.}~\bibnamefont
  {Xu}}, \bibinfo {author} {\bibfnamefont {X.}~\bibnamefont {Ma}}, \bibinfo
  {author} {\bibfnamefont {Q.}~\bibnamefont {Zhang}}, \bibinfo {author}
  {\bibfnamefont {H.-K.}\ \bibnamefont {Lo}}, \ and\ \bibinfo {author}
  {\bibfnamefont {J.-W.}\ \bibnamefont {Pan}},\ }\href@noop {} {\bibfield
  {journal} {\bibinfo  {journal} {Reviews of Modern Physics}\ }\textbf
  {\bibinfo {volume} {92}},\ \bibinfo {pages} {025002} (\bibinfo {year}
  {2020})}\BibitemShut {NoStop}%
\bibitem [{\citenamefont {Bennett}\ and\ \citenamefont
  {Brassard}(2014)}]{BB84}%
  \BibitemOpen
  \bibfield  {author} {\bibinfo {author} {\bibfnamefont {C.~H.}\ \bibnamefont
  {Bennett}}\ and\ \bibinfo {author} {\bibfnamefont {G.}~\bibnamefont
  {Brassard}},\ }\href@noop {} {\bibfield  {journal} {\bibinfo  {journal}
  {Theoretical Computer Science}\ }\textbf {\bibinfo {volume} {560}},\ \bibinfo
  {pages} {7} (\bibinfo {year} {2014})}\BibitemShut {NoStop}%
\bibitem [{\citenamefont {Phoenix}\ \emph {et~al.}(2000)\citenamefont
  {Phoenix}, \citenamefont {Barnett},\ and\ \citenamefont {Chefles}}]{PBC00}%
  \BibitemOpen
  \bibfield  {author} {\bibinfo {author} {\bibfnamefont {S.~J.~D.}\
  \bibnamefont {Phoenix}}, \bibinfo {author} {\bibfnamefont {S.~M.}\
  \bibnamefont {Barnett}}, \ and\ \bibinfo {author} {\bibfnamefont
  {A.}~\bibnamefont {Chefles}},\ }\href@noop {} {\bibfield  {journal} {\bibinfo
   {journal} {Journal of Modern Optics}\ }\textbf {\bibinfo {volume} {47}},\
  \bibinfo {pages} {507 } (\bibinfo {year} {2000})}\BibitemShut {NoStop}%
\bibitem [{\citenamefont {Bennett}(1992)}]{B92}%
  \BibitemOpen
  \bibfield  {author} {\bibinfo {author} {\bibfnamefont {C.~H.}\ \bibnamefont
  {Bennett}},\ }\href@noop {} {\bibfield  {journal} {\bibinfo  {journal}
  {Physical Review Letters}\ }\textbf {\bibinfo {volume} {68 21}},\ \bibinfo
  {pages} {3121} (\bibinfo {year} {1992})}\BibitemShut {NoStop}%
\bibitem [{\citenamefont {Boileau}\ \emph {et~al.}(2004)\citenamefont
  {Boileau}, \citenamefont {Tamaki}, \citenamefont {Batuwantudawe},
  \citenamefont {Laflamme},\ and\ \citenamefont
  {Renes}}]{Boileau2004UnconditionalSO}%
  \BibitemOpen
  \bibfield  {author} {\bibinfo {author} {\bibfnamefont {J.-C.}\ \bibnamefont
  {Boileau}}, \bibinfo {author} {\bibfnamefont {K.}~\bibnamefont {Tamaki}},
  \bibinfo {author} {\bibfnamefont {J.}~\bibnamefont {Batuwantudawe}}, \bibinfo
  {author} {\bibfnamefont {R.}~\bibnamefont {Laflamme}}, \ and\ \bibinfo
  {author} {\bibfnamefont {J.~M.}\ \bibnamefont {Renes}},\ }\href@noop {}
  {\bibfield  {journal} {\bibinfo  {journal} {Physical review letters}\
  }\textbf {\bibinfo {volume} {94 4}},\ \bibinfo {pages} {040503} (\bibinfo
  {year} {2004})}\BibitemShut {NoStop}%
\bibitem [{\citenamefont {Winick}\ \emph {et~al.}(2018)\citenamefont {Winick},
  \citenamefont {L{\"u}tkenhaus},\ and\ \citenamefont
  {Coles}}]{Winick2018ReliableNK}%
  \BibitemOpen
  \bibfield  {author} {\bibinfo {author} {\bibfnamefont {A.~B.}\ \bibnamefont
  {Winick}}, \bibinfo {author} {\bibfnamefont {N.}~\bibnamefont
  {L{\"u}tkenhaus}}, \ and\ \bibinfo {author} {\bibfnamefont {P.~J.}\
  \bibnamefont {Coles}},\ }\href@noop {} {\bibfield  {journal} {\bibinfo
  {journal} {Quantum}\ } (\bibinfo {year} {2018})}\BibitemShut {NoStop}%
\bibitem [{\citenamefont {Coles}\ \emph {et~al.}(2016)\citenamefont {Coles},
  \citenamefont {Metodiev},\ and\ \citenamefont
  {L{\"u}tkenhaus}}]{coles2016numerical}%
  \BibitemOpen
  \bibfield  {author} {\bibinfo {author} {\bibfnamefont {P.~J.}\ \bibnamefont
  {Coles}}, \bibinfo {author} {\bibfnamefont {E.~M.}\ \bibnamefont {Metodiev}},
  \ and\ \bibinfo {author} {\bibfnamefont {N.}~\bibnamefont {L{\"u}tkenhaus}},\
  }\href@noop {} {\bibfield  {journal} {\bibinfo  {journal} {Nature
  communications}\ }\textbf {\bibinfo {volume} {7}},\ \bibinfo {pages} {11712}
  (\bibinfo {year} {2016})}\BibitemShut {NoStop}%
\bibitem [{\citenamefont {Beaudry}\ \emph {et~al.}(2008)\citenamefont
  {Beaudry}, \citenamefont {Moroder},\ and\ \citenamefont
  {L{\"u}tkenhaus}}]{squashing}%
  \BibitemOpen
  \bibfield  {author} {\bibinfo {author} {\bibfnamefont {N.~J.}\ \bibnamefont
  {Beaudry}}, \bibinfo {author} {\bibfnamefont {T.}~\bibnamefont {Moroder}}, \
  and\ \bibinfo {author} {\bibfnamefont {N.}~\bibnamefont {L{\"u}tkenhaus}},\
  }\href@noop {} {\bibfield  {journal} {\bibinfo  {journal} {Physical review
  letters}\ }\textbf {\bibinfo {volume} {101 9}},\ \bibinfo {pages} {093601}
  (\bibinfo {year} {2008})}\BibitemShut {NoStop}%
\bibitem [{\citenamefont {Devetak}\ and\ \citenamefont
  {Winter}(2005)}]{Devetak2005DistillationOS}%
  \BibitemOpen
  \bibfield  {author} {\bibinfo {author} {\bibfnamefont {I.}~\bibnamefont
  {Devetak}}\ and\ \bibinfo {author} {\bibfnamefont {A.~J.}\ \bibnamefont
  {Winter}},\ }\href@noop {} {\bibfield  {journal} {\bibinfo  {journal}
  {Proceedings of the Royal Society A: Mathematical, Physical and Engineering
  Sciences}\ }\textbf {\bibinfo {volume} {461}},\ \bibinfo {pages} {207 }
  (\bibinfo {year} {2005})}\BibitemShut {NoStop}%
\bibitem [{\citenamefont {Gittsovich}\ \emph {et~al.}(2014)\citenamefont
  {Gittsovich}, \citenamefont {Beaudry}, \citenamefont {Narasimhachar},
  \citenamefont {Alvarez}, \citenamefont {Moroder},\ and\ \citenamefont
  {L{\"u}tkenhaus}}]{gittsovich2014squashing}%
  \BibitemOpen
  \bibfield  {author} {\bibinfo {author} {\bibfnamefont {O.}~\bibnamefont
  {Gittsovich}}, \bibinfo {author} {\bibfnamefont {N.~J.}\ \bibnamefont
  {Beaudry}}, \bibinfo {author} {\bibfnamefont {V.}~\bibnamefont
  {Narasimhachar}}, \bibinfo {author} {\bibfnamefont {R.~R.}\ \bibnamefont
  {Alvarez}}, \bibinfo {author} {\bibfnamefont {T.}~\bibnamefont {Moroder}}, \
  and\ \bibinfo {author} {\bibfnamefont {N.}~\bibnamefont {L{\"u}tkenhaus}},\
  }\href@noop {} {\bibfield  {journal} {\bibinfo  {journal} {Physical Review
  A}\ }\textbf {\bibinfo {volume} {89}},\ \bibinfo {pages} {012325} (\bibinfo
  {year} {2014})}\BibitemShut {NoStop}%
\bibitem [{\citenamefont {Ger{\v s}gorin}(1931)}]{Ger31}%
  \BibitemOpen
  \bibfield  {author} {\bibinfo {author} {\bibfnamefont {S.}~\bibnamefont
  {Ger{\v s}gorin}},\ }\href@noop {} {\bibfield  {journal} {\bibinfo  {journal}
  {Bulletin de l'Acad\'emie des Sciences de l'URSS. Classe des sciences
  math\'ematiques et na}\ ,\ \bibinfo {pages} {749}} (\bibinfo {year}
  {1931})}\BibitemShut {NoStop}%
\bibitem [{\citenamefont {Wang}\ and\ \citenamefont
  {L{\"u}tkenhaus}(2022)}]{Wang2022NumericalSP}%
  \BibitemOpen
  \bibfield  {author} {\bibinfo {author} {\bibfnamefont {W.}~\bibnamefont
  {Wang}}\ and\ \bibinfo {author} {\bibfnamefont {N.}~\bibnamefont
  {L{\"u}tkenhaus}},\ }\href@noop {} {\bibfield  {journal} {\bibinfo  {journal}
  {Physical Review Research}\ } (\bibinfo {year} {2022})}\BibitemShut {NoStop}%
\bibitem [{\citenamefont {Lo}\ \emph {et~al.}(2004)\citenamefont {Lo},
  \citenamefont {Ma},\ and\ \citenamefont {Chen}}]{Lo2004DecoySQ}%
  \BibitemOpen
  \bibfield  {author} {\bibinfo {author} {\bibfnamefont {H.-K.}\ \bibnamefont
  {Lo}}, \bibinfo {author} {\bibfnamefont {X.}~\bibnamefont {Ma}}, \ and\
  \bibinfo {author} {\bibfnamefont {K.}~\bibnamefont {Chen}},\ }\href@noop {}
  {\bibfield  {journal} {\bibinfo  {journal} {Physical review letters}\
  }\textbf {\bibinfo {volume} {94 23}},\ \bibinfo {pages} {230504} (\bibinfo
  {year} {2004})}\BibitemShut {NoStop}%
\end{thebibliography}%
\onecolumngrid
\appendix
\section{Proof of the existence of squashing model}
\label{sec:smn}

To simpilify our search for a valid squashing model, we apply two reductions according to \cite{squashing,gittsovich2014squashing}.
The first reduction decompose the squashing map with respect to the photon number subspace, as introduced in section \ref{sec:squashing-model}.

Moreover, the chosen classical post-processing scheme preserves single-click events in the basic measurement. Then we can apply the second reduction, which states that the $N$-photon squashing map can be further decomposed onto the subspace $P$ spanned by single-click events and its orthogonal complement $P_{\bot}$,
\begin{equation}
\Lambda_N=\Lambda_{P, N}+\Lambda_{P_{\perp}, N}.
\end{equation}
By the post-processing scheme we choose, the latter map $\Lambda_{P_{\perp}, N}$ simply output the maximally mixed state on all inputs.
\begin{figure*}[htbp]
    \centering
    \includegraphics[width=0.75\textwidth]{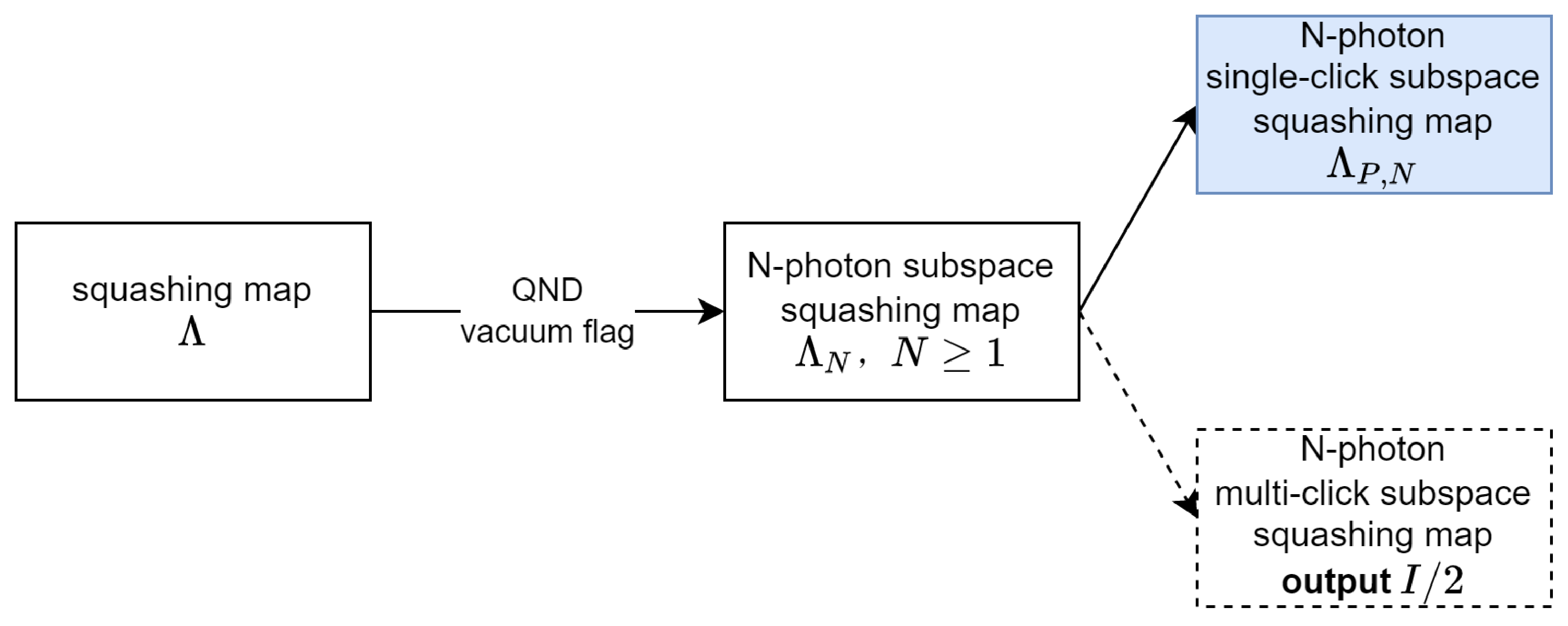}
    \caption{Relationship of $\Lambda$,$\Lambda_N$,$\Lambda_{P,N}$, and $\Lambda_{P_\perp,N}$. Our goal is to find a squashing map $\Lambda$. With reductions 1 and 2 mentioned in \cite{squashing}, the existence of $\Lambda_{P,N}$ for all non-zero photon number subspaces leads to the existence of $\Lambda$. Vacuum projection measurement in the target measurement and full measurement can be ignored due to the vacuum flag design.}
    \label{fig:squashing-reduction}
\end{figure*}
Thus, we have reduced the existence of a squashing map $\Lambda$ to the existence of a squashing map $\Lambda_{P,N}$ for all $N\in\mathbb{N}_+$ on the single-click subspace $P$ spanned by single-click events $\{\ket{\bar N,\bar 0,\bar 0},\ket{\bar 0,\bar N,\bar 0},\ket{\bar 0,\bar 0,\bar N}\}$. In this section, we will prove the latter.
\begin{theorem}
For all $N\in\mathbb{N}_+$, given the target measurement Eq.~\eqref{equ:target_measurement} and the full measurement Eq.~\eqref{equ:full_measurement_n_subspace} on the $N$-photon subspace, there exists a CPTP map $\Lambda_{P,N}$ on the single-click subspace $P$ such that 
\begin{equation}
\operatorname{Tr}(\rho F_{M,N}^{(k)}) =\operatorname{Tr}(\Lambda(\rho) F_{Q}^{(k)}) \; \forall\rho\in P, \forall k.
\end{equation}
\end{theorem}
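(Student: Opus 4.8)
The plan is to recast the statement as a finite-dimensional feasibility problem for a Choi matrix and then certify positivity. Passing to the Heisenberg picture, the required identity $\operatorname{Tr}(\rho F_{M,N}^{(k)})=\operatorname{Tr}(\Lambda(\rho)F_Q^{(k)})$ for all $\rho\in P$ is equivalent to $\Lambda^\dagger(F_Q^{(k)})=F_{M,N}^{(k)}$ for every $k$, where $\Lambda^\dagger:\mathcal{B}(\mathcal{H}_Q)\to\mathcal{B}(P)$ is the adjoint map; $\Lambda$ is CPTP precisely when $\Lambda^\dagger$ is completely positive and unital. Because the three target elements $F_Q^{(k)}=\tfrac{2}{3}\ket{\overline{\psi_k}}\bra{\overline{\psi_k}}$ are supported on the $X$--$Z$ plane, they span only the real linear space $\{I,\sigma_x,\sigma_z\}$; the matching constraints therefore fix $\Lambda^\dagger$ on this three-dimensional subspace of qubit observables, and since $\sum_k F_Q^{(k)}=I$ maps to $\sum_k F_{M,N}^{(k)}=I_P$, unitality is automatic. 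The image $\Lambda^\dagger(\sigma_y)$ is left free, and these remaining entries are exactly the knobs I would tune to enforce complete positivity. Equivalently, I encode the whole problem in the $6\times6$ Choi matrix $J_N$ on $\mathcal{H}_Q\otimes P$ (with $\dim\mathcal{H}_Q=2$ and $\dim P=3$ for $N\ge2$): the measurement-matching and trace-preservation conditions become affine constraints on $J_N$, and the task reduces to exhibiting a positive semidefinite $J_N$ in this affine family.

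To make $J_N$ explicit I would first fix coordinates on $P$. The single-click states satisfy $\braket{e_i|e_j}=\braket{\overline{\psi_i}|\overline{\psi_j}}^{N}=(-1/2)^N=:c$, so their Gram matrix is circulant with eigenvalues $1+2c$ (once) and $1-c$ (twice); for $N\ge2$ these are strictly positive, so $P$ is genuinely three-dimensional, and a symmetric (L\"owdin) orthonormalization produces a clean basis $\{\ket{\tilde e_k}\}$ in which every operator has entries that are explicit functions of $c$. I would then exploit the $S_3$ symmetry of the protocol---cyclic rotation $e^{i2\pi\sigma_y/3}$ permutes the outcomes while a reflection sends $\sigma_y\to-\sigma_y$---to restrict to a covariant candidate map, which collapses the freedom in $\Lambda^\dagger(\sigma_y)$ to a single symmetric degree of freedom. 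This yields a family of candidate Choi matrices $J_N(c)$ whose entries depend only on $c=(-1/2)^N$, for which positivity can be analysed block by block.

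For small photon numbers the verification is direct. When $N=1$ the single-click subspace coincides with $\mathcal{H}_Q$, and the frame identity $\sum_k\ket{\overline{\psi_k}}\bra{\overline{\psi_k}}=\tfrac{3}{2}I$ forces $F_{M,1}^{(k)}=F_Q^{(k)}$ exactly, so the identity map is already a squashing map. For $N\in\{2,3,4,5\}$ I would substitute the corresponding value $c\in\{1/4,-1/8,1/16,-1/32\}$ into $J_N(c)$, fix the covariant parameter, and confirm $\lambda_{\min}(J_N)\ge0$ by computing the eigenvalues of the resulting $6\times6$ matrix explicitly or numerically. This settles all $N\le5$.

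For $N>5$ the argument becomes asymptotic: since $|c|=2^{-N}\to0$, the states $\ket{e_k}$ become orthonormal, the off-diagonal entries of $J_N$---all of size $O(c)=O(2^{-N})$---vanish, and $J_N$ converges to a fixed diagonal, manifestly positive matrix $J_\infty$ corresponding to the limit of perfectly distinguishable single-click events. Applying Gershgorin's circle theorem, each eigenvalue of $J_N$ lies within a disc centred at a diagonal entry with radius equal to the corresponding off-diagonal row sum; the diagonals are bounded below by a positive constant while the row sums decay exponentially, so for $N>5$ every disc lies in $[0,\infty)$ and $J_N\succeq0$. I expect the main obstacle to be twofold: first, choosing the covariant value of $\Lambda^\dagger(\sigma_y)$ so that the affine constraints stay consistent and the limit $J_\infty$ is positive uniformly in $N$, rather than merely matching the measurements at a single $N$; and second, pinning down the crossover, since the small-$N$ values $c=1/4,1/8,\dots$ are not yet small enough for Gershgorin to apply, so the threshold $N=5$ must be located precisely and the residual cases discharged by the explicit eigenvalue check rather than the asymptotic bound.
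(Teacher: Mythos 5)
Your proposal follows essentially the same route as the paper: reduce to positivity of the Choi matrix of the adjoint map, note that matching $F_Q^{(k)}\mapsto F_{M,N}^{(k)}$ fixes the images of $I,\sigma_x,\sigma_z$ while $\Lambda^\dagger(\sigma_y)$ remains free, take the identity map for $N=1$, verify $N\in\{2,3,4,5\}$ by explicit eigenvalue computation, and handle $N>5$ by Gershgorin discs with exponentially decaying off-diagonal terms. One small quantitative caveat: the dominant decaying off-diagonal contributions come from the $(2/3)^N$ prefactors of the $N$-photon POVM elements (entering through $M_x$ and $M_z$), not only from the Gram overlaps of size $2^{-N}$, which is precisely why the crossover sits at $N=5$ rather than lower --- but since $(2/3)^N$ still decays geometrically, your argument goes through unchanged.
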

For the map $\Lambda_{P,N}$, we consider the basic measurement $\textbf{F}_{B,N}=[F_{B,N}^{(1)},F_{B,N}^{(2)},F_{B,N}^{(3)},F_{B,N}^{(1,2)},F_{B,N}^{(1,3)},F_{B,N}^{(2,3)},F_{B,N}^{(1,2,3)}]^T$ and the full measurement $\textbf{F}_{M,N}=[F_{M,N}^{(1)},F_{M,N}^{(2)},F_{M,N}^{(3)}]^T$ in the $N$-photon subspace,
\begin{equation}\label{equ:basic_measurement_n_subspace}
\begin{aligned}
F_{B,N}^{(1)}&= \left(\frac{2}{3}\right)^N\ket{\bar N,\bar 0,\bar 0}\bra{\bar N,\bar 0,\bar 0}
\\
F_{B,N}^{(2)}&= \left(\frac{2}{3}\right)^N\ket{\bar 0,\bar N,\bar 0}\bra{\bar 0,\bar N,\bar 0}
\\
F_{B,N}^{(3)}&= \left(\frac{2}{3}\right)^N\ket{\bar 0,\bar 0,\bar N}\bra{\bar 0,\bar 0,\bar N}
\\
F_{B,N}^{(1,2)}&=\sum_{\substack{N_1,N_2>0 \\ N_1+N_2=N}} \left(\frac{2}{3}\right)^{N_1}\left(\frac{2}{3}\right)^{N_2}\ket{\bar N_1,\bar N_2,\bar 0}\bra{\bar N_1,\bar N_2,\bar 0}
\\
F_{B,N}^{(1,3)}&=\sum_{\substack{N_1,N_2>0 \\ N_1+N_2=N}} \left(\frac{2}{3}\right)^{N_1}\left(\frac{2}{3}\right)^{N_2}\ket{\bar N_1,\bar 0,\bar N_2}\bra{\bar N_1,\bar 0,\bar N_2}
\\
F_{B,N}^{(2,3)}&=\sum_{\substack{N_1,N_2>0 \\ N_1+N_2=N}} \left(\frac{2}{3}\right)^{N_1}\left(\frac{2}{3}\right)^{N_2}\ket{\bar 0,\bar N_1,\bar N_2}\bra{\bar 0,\bar N_1,\bar N_2}
\\
F_{B,N}^{(1,2,3)}&=I_N-\sum_{i=1,2,3} F^{(i)}_{B,N}-\sum_{\substack{i\neq j \\ i=1,2,3 \\j=1,2,3}} F^{(i,j)}_{B,N}.
\end{aligned}
\end{equation}
Then they are connected by a stochastic matrix $\mathcal{P}$,
\begin{equation}\label{equ:full_measurement_n_subspace}
\begin{aligned}
\textbf{F}_{M,N} = \mathcal{P} \textbf{F}_{B,N},
\end{aligned}
\end{equation}
where $\mathcal{P}$ is defined in Eq.~\eqref{equ:stochastic-matrix}.

Next, we want to verify that $\Lambda_{P,N}$ can be a completely positive map under the imposed constraints. 
We consider the adjoint map $\Lambda^\dagger_{P,N}$, which is positive if and only if $\Lambda_{P,N}$ is positive. We use Choi–Jamiołkowski isomorphism to reformulate the positivity of $\Lambda^\dagger_{P,N}$. In order to check for positivity of $\Lambda^\dagger_{P,N}$, we only need to verify that its Choi matrix $\tau_{P,N}$ is positive definite,
\begin{equation}
\begin{aligned}
\tau_{P, N}&=I \otimes \Lambda_{P, N}^{\dagger}\left(\left|\psi^{+}\right\rangle\left\langle\psi^{+}\right|\right)
\\
&=\frac{1}{4}\left(I_{Q} \otimes I_{M}+\sum_{\alpha=x, y, z} \sigma_{\alpha}^{T} \otimes \Lambda_{P, N}^{\dagger}\left(\sigma_{\alpha}\right)\right),
\end{aligned}
\end{equation}
where $\ket{\psi^+}=\frac{1}{\sqrt{2}} \sum_{j=0,1}\ket{j}\ket{j}$ is a maximally entangled state, $I_Q$ and $I_M$ stand for the identity operator on $\mathcal{H}_Q$ and $\mathcal{H}_M$, $\Lambda^{\dagger}_{P,N}$ is the adjoint map of $\Lambda_{P,N}$, and $\sigma_x,\sigma_y,\sigma_z$ stand for the Pauli operators.

We can express $\sigma_x$ and $\sigma_z$ as linear combinations of the POVM elements of the target measurement, while $\sigma_y$ cannot be expressed in this way,
\begin{equation}\label{equ:linear-relation-in-lambda_dagger}
\begin{aligned}
\sigma_z&= F_Q^{(2)} + F_Q^{(3)} - 2F_Q^{(1)}
\\
\sigma_x&= \sqrt{3} (F_Q^{(3)} - F_Q^{(2)})
\\
\sigma_y&\notin \text{span}\{F^{(i)}_Q,i=1,2,3\}.
\end{aligned}
\end{equation}
Considering the adjoint map,
\begin{equation}
\begin{aligned}
\Lambda^\dagger_{P,N}(F^{(j)}_Q) &= F^{(j)}_{M,P,N}\quad j=1,2,3,
\end{aligned}
\end{equation}
where $F^{(j)}_{M,P,N}$ stands for the full measurement POVM element $F^{(j)}_{M}$ restricted $N$-photon single-click subspace $P$\cite{squashing}. We denote $O_p$ as the projection operator on subspace $P$, then
\begin{equation}\label{equ:FMPN_formulation}
\begin{aligned}
F^{(j)}_{M,P,N} 
&= O_p F^{(j)}_{M,N} O_p^\dagger
\\
&= F^{(j)}_{B,N} + \frac{1}{3}O_p(F_{B,N}^{(1,2)}+F_{B,N}^{(1,3)}+F_{B,N}^{(2,3)}+F_{B,N}^{(1,2,3)})O_p^\dagger.
\end{aligned}
\end{equation}
where $O_p=\sum_{i=1,2,3}\ket{\xi_i}\bra{\xi_i}$ and $\{\ket{\xi_i}\}_i$ is a set of orthonormal basis for $P$ obtained from Gram–Schmidt process on $\ket{\bar N,\bar 0,\bar 0},\ket{\bar 0,\bar N,\bar 0},\ket{\bar 0,\bar 0,\bar N}$.
We can see that for $\alpha=x,z$, the image $\Lambda_{P,N}^\dagger(\sigma_\alpha)$ is fixed, while it is not true for $\alpha=y$. We can separate these parts in $\tau_{P,N}$ to clarify the degree of freedom,
\begin{equation}
\begin{aligned}
\tau_{P,N} &= \frac{1}{4}(\tau_{P,N,\text{fixed}}+\tau_{P,N,\text{open}}),
\\
\end{aligned}
\end{equation}
where
\begin{equation}
\begin{aligned}
\tau_{P,N,\text{fixed}} &= 1_Q\otimes 1_{M,N}
\\&
+ \sigma_x^T \otimes \sqrt{3}(F_{M,P,N}^{(3)}-F_{M,P,N}^{(2)})
\\&
+ \sigma_z^T \otimes (F_{M,P,N}^{(2)}+F_{M,P,N}^{(3)}-2F_{M,P,N}^{(1)})
\\
\tau_{P,N,\text{open}} &= \sigma_{y}^{T} \otimes \Lambda_{P, N}^{\dagger}\left(\sigma_{y}\right).
\end{aligned}
\end{equation}
We consider $\mathcal{M}_{\phi}$ that maps a linear operator to a matrix,
\begin{equation}
[\mathcal{M}_{\phi}(A)]_{jk} = \braket{\phi_j|A|\phi_k},
\end{equation}
where $\ket{\phi_j}$ is a set of normal (and not necessarily orthogonal) basis. It is known that positivity of $\mathcal{M}_{\phi}(A)$ implies positivity of $A$ and vice versa. Therefore, we have the following lemma.
\begin{lemma}
For any basis $\{\ket{\phi_j}\}_j$ of an $n$-dimensional Hilbert space $\mathcal{H}$, an operator $A$ on that space is positive if and only if $\mathcal{M}_{\phi}(A)$ is positive definite.
\end{lemma}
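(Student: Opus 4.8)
The plan is to recognize $\mathcal{M}_{\phi}(A)$ as a congruence transformation of $A$ by the (invertible) change-of-basis matrix, after which the equivalence reduces to the standard fact that congruence preserves positivity. First I would assemble the basis vectors as the columns of an $n\times n$ matrix $\Phi=[\,\ket{\phi_1}\ \ket{\phi_2}\ \cdots\ \ket{\phi_n}\,]$, so that the $(j,k)$ entry of $\Phi^{\dagger} A \Phi$ is precisely $\braket{\phi_j|A|\phi_k}=[\mathcal{M}_{\phi}(A)]_{jk}$. This identifies $\mathcal{M}_{\phi}(A)=\Phi^{\dagger} A \Phi$. Since Hermiticity of $A$ transfers to $\mathcal{M}_{\phi}(A)$ under this relation, both sides are Hermitian and the notion of positivity is well posed.

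Next I would use the hypothesis that $\{\ket{\phi_j}\}_j$ is a basis of the $n$-dimensional space: its $n$ columns are linearly independent, so $\Phi$ is invertible. The equivalence then follows from a direct test-vector computation,
\begin{equation}
v^{\dagger}\mathcal{M}_{\phi}(A)v = v^{\dagger}\Phi^{\dagger} A \Phi v = (\Phi v)^{\dagger} A (\Phi v).
\end{equation}
Because $\Phi$ is invertible, $v\mapsto \Phi v$ is a bijection onto $\mathcal{H}$; hence $v^{\dagger}\mathcal{M}_{\phi}(A)v\ge 0$ for all $v$ exactly when $w^{\dagger}A w\ge 0$ for all $w=\Phi v$, i.e.\ for all $w\in\mathcal{H}$. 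This yields $\mathcal{M}_{\phi}(A)\succeq 0 \Leftrightarrow A\succeq 0$, and replacing the inequalities by strict ones (using that $\Phi v=0$ only for $v=0$) gives the positive-definite version as well. Equivalently, one may simply invoke Sylvester's law of inertia, since congruence by an invertible matrix preserves the signature of a Hermitian operator.

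There is no genuinely hard step here; the one point that merits care is that the basis hypothesis is exactly what guarantees invertibility of $\Phi$, and therefore that the test-vector map is a bijection. I would stress that this hypothesis is indispensable: for an overcomplete or deficient collection the matrix $\Phi$ fails to be square-invertible, and the implication $\mathcal{M}_{\phi}(A)\succeq 0 \Rightarrow A\succeq 0$ can break down, since positivity probed only on a non-spanning set says nothing about $A$ along the unprobed directions. With the basis assumption in force the equivalence is immediate, and the lemma is then exactly the tool needed to test positivity of the Choi matrix $\tau_{P,N}$ using the non-orthonormal single-click states $\{\ket{\bar N,\bar 0,\bar 0},\ket{\bar 0,\bar N,\bar 0},\ket{\bar 0,\bar 0,\bar N}\}$ as the basis $\{\ket{\phi_j}\}$.
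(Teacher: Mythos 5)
Your proposal is correct and is essentially the paper's own argument: both reduce the claim to the quadratic-form identity $v^{\dagger}\mathcal{M}_{\phi}(A)v=\braket{s|A|s}$ with $\ket{s}=\sum_j v_j\ket{\phi_j}$ and use that the basis hypothesis makes $v\mapsto\ket{s}$ a bijection (the paper phrases this via comparison with an auxiliary orthonormal basis, you via the congruence $\Phi^{\dagger}A\Phi$, which is the same computation).
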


\begin{proof}
Consider a set of orthonormal basis $\{\ket{\phi^\prime_j}\}_j$. It is known that an operator is positive if and only if its matrix under a set of orthonormal basis is positive definite, i.e. $A$ is positive if and only if $\mathcal{M}_{\phi^\prime}(A)$ is positive definite. Positivity of $\mathcal{M}_{\phi}(A)$ is equivalent to that for all non-zero vector $b$ in $\mathbb{C}^n$,
$$
b^\dagger \mathcal{M}_{\phi}(A) b >0
$$
then the positivity of $\mathcal{M}_{\phi}(A)$ is equivalent to that of $\mathcal{M}_{\phi^\prime}(A)$, i.e., for all non-zero vector $a$ in $\mathbb{C}^n$,
$$
a^\dagger \mathcal{M}_{\phi^\prime}(A) a=\braket{s|A|s}=b^\dagger \mathcal{M}_{\phi}(A) b>0,\forall a\in \mathbb{C}^n\setminus\{0\},
$$
where $\sum_j a_j\ket{\phi_j}=\ket{s}=\sum_j b_j\ket{\phi^\prime_j}$.

By proving that $\mathcal{M}_{\phi^\prime}(A)$ is positive definite if and only if $\mathcal{M}_{\phi}(A)$ is positive definite, we have proved the lemma.
\end{proof}
Now, let's consider $M_{\Phi}(\tau_{P,N})$, where $\{\ket{\phi_j}\}_j$ is a set of basis vectors on the $N$-photon single-click subspace,
\begin{equation}
\begin{aligned}
\ket{\phi_{1}} &= \ket{\bar N,\bar 0,\bar 0}
\\
\ket{\phi_{2}} &= \ket{\bar 0,\bar N,\bar 0}
\\
\ket{\phi_{3}} &= \ket{\bar 0,\bar 0,\bar N},
\end{aligned}
\end{equation}
and the full basis $\{\ket{\Phi_j}\}_j$ is
\begin{equation}
\begin{aligned}
\ket{\Phi_{jk}} &= \ket{j}\otimes\ket{\phi_{k}},j=0,1,k=1,2,3.
\end{aligned}
\end{equation}
We can represent $\mathcal{M}_{\Phi}(\tau_{P,N})$ in blocks,
\begin{equation}\label{equ:main_choi_matrix}
\mathcal{M}_{\Phi}(\tau_{P,N}) =
\begin{bmatrix}
M_1+M_z & M_x+S  \\
M_x-S   & M_1-M_z
\end{bmatrix}.
\end{equation}
The block matrices are under the basis $\{\ket{\phi_j}\}_j$,
\begin{equation}
\begin{aligned}
M_1 &= \mathcal{M}_{\phi}(I_M)
\\
M_\alpha &= \mathcal{M}_{\phi}(\Lambda_{N}^\dagger(\sigma_\alpha)),\quad 
 \alpha=x,y,z,
\end{aligned}
\end{equation}
and the variable matrix $S=-iM_y$ represents the degree of freedom from $\Lambda^\dagger_{N}(\sigma_y)$,
\begin{equation}
S = \begin{bmatrix}
0    & x_1  & x_2 \\
-x_1 & 0    & x_3 \\
-x_2 & -x_3 & 0
\end{bmatrix}.
\end{equation}
Next, we need to prove that there exists $x=(x_1,x_2,x_3)$ for all $N\in\mathbb{N}_+$ such that $\mathcal{M}_{\Phi}(\tau_{P,N})$ is positive definite.

\subsection{Case for $N\leq 5$}

In this subsection, we will prove the existence of $x$ for some small $N$ such that $\mathcal{M}_{\Phi}(\tau_{P,N})$ is positive definite.

\begin{lemma}\label{lem:case_N_leq_5}
For $N\leq 5$, there exists $x=(x_1,x_2,x_3)$ such that $\mathcal{M}_{\Phi}(\tau_{P,N})$ is positive definite.
\end{lemma}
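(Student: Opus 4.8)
The plan is to reduce the existence claim to a finite collection of explicit positive-definiteness checks, one per $N\in\{1,2,3,4,5\}$, exploiting that the only adjustable data in $\mathcal{M}_\Phi(\tau_{P,N})$ are the three real entries $x_1,x_2,x_3$ of $S$.

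First I would make the fixed blocks $M_1,M_x,M_z$ completely explicit as functions of $N$. Since $\ket{\phi_i}$ carries $N$ photons in polarization $\ket{\overline{\psi_i}}$ and the three orthogonal states have pairwise single-photon overlap $\braket{\overline{\psi_i}|\overline{\psi_j}}=-\tfrac12$ (the signal states forming an equilateral triangle), the bosonic overlap factorizes as $\braket{\phi_i|\phi_j}=(-\tfrac12)^N$ for $i\neq j$. Hence $M_1=\mathcal{M}_\phi(I_M)$ is the Gram matrix $(1-g)I+gJ$ with $g:=(-\tfrac12)^N$ and $J$ the all-ones matrix. The decisive simplification is that the common double- and triple-click contribution to $F_{M,P,N}^{(1)},F_{M,P,N}^{(2)},F_{M,P,N}^{(3)}$ is identical across the three outcomes, so it cancels in the combinations $F_{M,P,N}^{(2)}+F_{M,P,N}^{(3)}-2F_{M,P,N}^{(1)}$ and $F_{M,P,N}^{(3)}-F_{M,P,N}^{(2)}$ appearing in $\tau_{P,N,\mathrm{fixed}}$. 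Thus $\Lambda^\dagger_{P,N}(\sigma_z)$ and $\Lambda^\dagger_{P,N}(\sigma_x)$ collapse to single-click terms $(\tfrac23)^N\ket{\phi_j}\bra{\phi_j}$, and $M_x,M_z$ become explicit matrices in $g$ and $t:=(\tfrac23)^N$ alone.

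Next I would treat $N=1$ separately, because there the three single-click states are qubit states spanning the full target space, so $M_1=(1-g)I+gJ$ is singular and $P$ coincides with $\mathcal{H}_Q$; the identity map is then itself a valid squashing model and no Choi check is needed. For $N\in\{2,3,4,5\}$ one has $1+2g>0$, so $M_1\succ0$, the three states are linearly independent, and the $\mathcal{M}_\Phi$ formalism genuinely applies. For each such $N$ I would substitute the numerical $g,t$ into the block form of Eq.~\eqref{equ:main_choi_matrix} and exhibit an explicit $x$, certifying $\mathcal{M}_\Phi(\tau_{P,N})\succ0$. Because the protocol carries a $\mathbb{Z}_3$ cyclic symmetry permuting $(\phi_1,\phi_2,\phi_3)$, a natural first ansatz is the symmetric $x_1=x_2=x_3$, which reduces the search to one parameter; should this fail for some $N$, I would instead solve the small semidefinite feasibility problem ``find $x$ with $\mathcal{M}_\Phi(\tau_{P,N})\succ0$'' directly.

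The main obstacle is precisely the intermediate regime $N\in\{2,\dots,5\}$: here neither $g$ nor $t$ is small, so positivity cannot be obtained from a perturbative or Gershgorin estimate (that argument is reserved for the $N>5$ case), and the off-diagonal blocks $M_x\pm S$ genuinely compete with the diagonal blocks $M_1\pm M_z$. One must therefore both choose $x$ well and produce a rigorous certificate rather than a floating-point eigenvalue. I would upgrade the ``minimum eigenvalue'' check to an exact certificate by giving a rational $x$ and exhibiting either the characteristic polynomial or an $LDL^{\top}$ factorization of the $6\times6$ matrix at that $x$, so that a strictly positive lower bound on the smallest eigenvalue is established symbolically.
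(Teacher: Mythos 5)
Your proposal follows essentially the same route as the paper: treat $N=1$ by noting $F_Q^{(j)}=F_{M,1}^{(j)}$ so the identity map suffices, and for each $N\in\{2,3,4,5\}$ exhibit an explicit $x$ and certify that the $6\times6$ matrix $\mathcal{M}_\Phi(\tau_{P,N})$ is positive definite (the paper finds $x$ by brute-force numerical optimization and reports the minimum eigenvalues in a table, whereas you propose a symmetric ansatz and an exact $LDL^{\top}$/characteristic-polynomial certificate, which is a refinement rather than a different argument). This is correct and matches the paper's proof.
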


\begin{proof}
The proof is to solve a feasibility problem, i.e., to find $x$ such that the minimal eigenvalue of $\mathcal{M}_{\Phi}(\tau_{P,N})$ is positive. Here we try to find a feasible point through an optimization process. We look for $x$ that makes the minimal eigenvalue as large as possible. If we find a candidate $x$ that makes the minimal eigenvalue positive, we proves the lemma.

The optimization problem is solved by the optimizer \verb|scipy.optimize.brute|. For $N\in\{2,3,4,5\}$, the feasible solution $x$ and its corresponding objective value are listed in Table \ref{tab:N_leq_5_x}.
For $N=1$ case, we recall the application of virtual QND measurement mentioned in section \ref{sec:squashing-model}. The target measurement is actually equal to single-photon subspace projection of the full measurement,
\begin{equation}
F_Q^{(j)}=F_{M,1}^{(j)},\forall j.
\end{equation}
Therefore, we let $\Lambda_1$ be an identity map. Then we have proved existence for $N\leq 5$.
\end{proof}
\begin{table}[htbp]
    \centering
    \begin{tabular}{r r r r r}
        \hline
         N &  $x_1$ & $x_2$ & $x_3$ & $\min\operatorname{eig}(\mathcal{M}_{\Phi}(\tau_{P,N}))$\\
        \hline
            2 & -0.0047 & 0.0053 &-0.0047 & 0.536 \\
            3 & -0.1093 & 0.1093 & 0.1093 & 0.713 \\
            4 & -0.0121 & 0.0111 &-0.0110 & 0.889 \\
            5 & -0.0367 & 0.0366 & 0.0359 & 0.930 \\
        \hline
    \end{tabular}
    \caption{Minimum eigenvalues of $\mathcal{M}_{\phi}(\tau_{P,N})$ for $N\leq 5$. The positive minimum eigenvalues imply the positivity of $\tau_{P,N}$.}
    \label{tab:N_leq_5_x}
\end{table}
\subsection{Case for $N>5$}
We notice that $\lim_{N\to\infty} \mathcal{M}_{\Phi}(\tau_{P,N})=I$ for $x=0$. Based on this observation, we set $x=0$ and split the matrix into two parts: a major part that converges to the identity matrix and an additional part converges to zero when $N\rightarrow \infty$. Our analysis focuses on the major part and excludes the effects of the additional part by using Gershgorin's circle theorem \cite{Ger31}.
\begin{lemma}\label{lem:case_N_greater_than_5}
The choice of $x=0$ makes $\mathcal{M}_{\Phi}(\tau_{P,N})$ positive for all $N>5$.
\end{lemma}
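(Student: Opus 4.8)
The plan is to exploit the block structure of $\mathcal{M}_{\Phi}(\tau_{P,N})$ together with the fact that, once $x=0$ (so $S=0$), the only nontrivial data are the Gram matrix $M_1$ and the single-click contributions $M_x,M_z$. First I would record that every matrix element is determined by the single-photon overlap $\braket{\overline{\psi_i}|\overline{\psi_j}}=1/2$ for $i\neq j$: since $\ket{\phi_k}$ is the $N$-fold state with all photons in mode $\ket{\overline{\psi_k}}$, one has $\braket{\phi_i|\phi_j}=(1/2)^N$ for $i\neq j$, so that $M_1=I_3+O((1/2)^N)$ off-diagonally. The key simplification is that in the combinations $\Lambda_{P,N}^{\dagger}(\sigma_z)=F_{M,P,N}^{(2)}+F_{M,P,N}^{(3)}-2F_{M,P,N}^{(1)}$ and $\Lambda_{P,N}^{\dagger}(\sigma_x)=\sqrt{3}(F_{M,P,N}^{(3)}-F_{M,P,N}^{(2)})$ the common multi-click term in Eq.~\eqref{equ:FMPN_formulation} cancels, leaving only the single-click pieces $(2/3)^N\ket{\phi_k}\bra{\phi_k}$. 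Hence $M_x$ and $M_z$ carry an overall prefactor $(2/3)^N$ and vanish as $N\to\infty$, which is exactly the observation $\lim_{N\to\infty}\mathcal{M}_{\Phi}(\tau_{P,N})=I_6$.

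Second, I would turn these observations into explicit entrywise bounds. The diagonal entries equal $1\pm[M_z]_{kk}$ with $[M_z]_{kk}=O((2/3)^N)$, and the smallest of the six is bounded below by $1-2(2/3)^N$. Every off-diagonal entry is $O((1/2)^N)$, the dominant ones being the Gram off-diagonals $(1/2)^N$ of $M_1$, while the $M_x$ and off-diagonal $M_z$ contributions are even smaller, of order $(1/3)^N$; hence the off-diagonal row sums $R_N$ admit an upper bound of order $(1/2)^N$. Both bounds are monotone in $N$: the diagonal lower bound $1-2(2/3)^N$ increases with $N$, while $R_N$ decreases.

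Third, I would invoke Gershgorin's circle theorem \cite{Ger31}: since $\mathcal{M}_{\Phi}(\tau_{P,N})$ is Hermitian, each eigenvalue lies in a disc centered at a diagonal entry with radius equal to the corresponding off-diagonal row sum. Positive definiteness then follows from the scalar inequality $1-2(2/3)^N>R_N$. By the monotonicity of both sides it suffices to verify this single inequality at $N=6$, where $1-2(2/3)^6\approx 0.82$ comfortably exceeds the row-sum bound; it then holds for all $N>5$, which proves the lemma.

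Finally, the main obstacle is not the Gershgorin step itself but the balance that fixes the threshold at $N=5$. Because the diagonal entries dip below $1$, with a deficit $2(2/3)^N$ that decays only like $(2/3)^N$, one cannot simply argue ``diagonal near $1$, off-diagonal negligible'' for small $N$; one must verify that the faster-decaying off-diagonal radius $R_N\sim(1/2)^N$ is genuinely dominated by the diagonal floor $1-2(2/3)^N$, and this domination degrades for small $N$ (the floor even turns negative at $N=1$). This is precisely why the regime $N\le 5$, where the exponential suppression has not yet taken hold, is excluded from the Gershgorin argument and is instead handled directly by the numerical feasibility search of Lemma~\ref{lem:case_N_leq_5}. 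The care required is in making the entry bounds explicit enough that the $N=6$ check is rigorous and the extension to all $N>5$ is purely monotone.
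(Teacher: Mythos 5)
Your overall strategy coincides with the paper's: set $x=0$, observe that the multi-click terms cancel in $\Lambda_{P,N}^{\dagger}(\sigma_x)$ and $\Lambda_{P,N}^{\dagger}(\sigma_z)$ so that only $(2/3)^N$-suppressed single-click pieces and $(1/2)^N$-suppressed Gram overlaps survive, apply Gershgorin's circle theorem, and reduce to a single check at $N=6$ plus monotonicity. However, there is a concrete error in your entrywise estimates that invalidates the scalar inequality you propose to verify. You claim every off-diagonal entry of $\mathcal{M}_{\Phi}(\tau_{P,N})$ is $O((1/2)^N)$, with the $M_x$ contributions "even smaller, of order $(1/3)^N$." That is true of the off-diagonal entries of $M_x$ \emph{within its own $3\times 3$ block}, but you have forgotten that the entire block $M_x\pm S$ sits in the off-diagonal position of the $6\times 6$ matrix in Eq.~\eqref{equ:main_choi_matrix}. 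The \emph{diagonal} entries of $M_x$, namely $[M_x]_{22}\approx-\sqrt{3}\,(2/3)^N$ and $[M_x]_{33}\approx+\sqrt{3}\,(2/3)^N$, therefore appear as off-diagonal entries of the full matrix (e.g.\ at positions $(2,5)$ and $(3,6)$), and they dominate the Gershgorin radii of rows $2,3,5,6$. The correct radii are $O((2/3)^N)$, not $O((1/2)^N)$ — this is visible in the paper's own bound, where the dominant term at $N=6$ is $(2/3)^6(2+\sqrt{3})\approx 0.33$, two orders of magnitude larger than $(1/2)^6\approx 0.016$. Consequently the inequality you propose to check at $N=6$, namely $1-2(2/3)^N > R_N$ with $R_N\sim(1/2)^N$, is not the inequality that actually needs to hold, and your claim that the diagonal floor "comfortably exceeds the row-sum bound" rests on the wrong estimate.

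The lemma is still salvageable along your lines: with the corrected radii the governing inequality is essentially $1 > (2+\sqrt{3})(2/3)^N + O((1/2)^N)$, which does hold at $N=6$ (with a much smaller margin than you suggest) and is monotone in $N$, so the Gershgorin step goes through once the $\sqrt{3}(2/3)^N$ terms are included in the radius bound — this is exactly what the paper's $R_{c2}$ does. A secondary, smaller imprecision: the centers of the Gershgorin discs are not exactly $1\pm[M_z]_{kk}$ with $[M_z]_{kk}$ the leading term; they carry $O((1/6)^N)$ corrections from the $\delta_z$ matrices, which the paper tracks explicitly via $|r_j-r_j'|\le(2/3)^N\|\delta_z\|$. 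You should make both corrections before asserting that the $N=6$ verification is rigorous.
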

\begin{proof}
For Hermitian matrices, we only present the upper triangular part. 
The matrix representation of an identity operator is
$$
M_1 = \begin{bmatrix}
1 & a & a \\
  & 1 & b \\
  &   & 1 
\end{bmatrix},
$$
where $a=0.5^N$ and $b=(-0.5)^N$.
For single-click events in $F_{B,N}$,
\begin{equation}\label{equ:matrix-form-fbn}
\begin{aligned}
\mathcal{M}_{\phi}(F^{(1)}_{B,N}) &= \left(\frac{2}{3}\right)^N 
\left(\begin{bmatrix}
1 & 0 & 0 \\
  & 0 & 0 \\
  &   & 0
\end{bmatrix}
+
\delta_1
\right),\text{ where }\delta_1 = \begin{bmatrix}
0 & a & a \\
  & a^2 & a^2 \\
  &     & a^2
\end{bmatrix}
\\
\mathcal{777M}_{\phi}(F^{(2)}_{B,N}) &= \left(\frac{2}{3}\right)^N
\left(\begin{bmatrix}
0 & 0 & 0 \\
  & 1 & 0 \\
  &   & 0
\end{bmatrix}+
\delta_2
\right),\text{ where }\delta_2 = \begin{bmatrix}
a^2 & a & ab \\
    & 0 & b \\
    &   & b^2
\end{bmatrix}
\\
\mathcal{M}_{\phi}(F^{(3)}_{B,N}) &= \left(\frac{2}{3}\right)^N
\left(\begin{bmatrix}
0 & 0 & 0 \\
  & 0 & 0 \\
  &   & 1
\end{bmatrix}
+
\delta_3
\right),\text{ where }\delta_3 = \begin{bmatrix}
a^2 & ab & a \\
  & b^2 & b \\
  &   & 0
\end{bmatrix},
\end{aligned}
\end{equation}
and $\|\delta_j\|_{\max} = a$ for all $j\in\{1,2,3\}$, where $\|\cdot\|_{\max}$ is the max norm of a matrix.
Recall $M_x$ and $M_z$ in Eq.~\eqref{equ:main_choi_matrix}. They can be derived from Eq.~\eqref{equ:matrix-form-fbn} and Eq.~\eqref{equ:FMPN_formulation},
\begin{equation}
\begin{aligned}
M_x &= \sqrt{3}\mathcal{M}_{\phi}(F_{M,P,N}^{(3)}-F_{M,P,N}^{(2)})
\\
&= \sqrt{3}\left(\mathcal{M}_{\phi}(F_{B,N}^{(3)})-\mathcal{M}_{\phi}(F_{B,N}^{(2)})\right)
\\
&= \sqrt{3}\left(\frac{2}{3}\right)^N
\left(\begin{bmatrix}
0 & 0 & 0 \\
  & -1 & 0 \\
  &   & 1
\end{bmatrix}
+\delta_x\right), \;
\delta_x = \delta_1-\delta_2
\\
M_z &= \mathcal{M}_{\phi}(-2F_{M,P,N}^{(1)}+F_{M,P,N}^{(2)}+F_{M,P,N}^{(3)})
\\
&= -2\mathcal{M}_{\phi}(F_{B,N}^{(1)})+\mathcal{M}_{\phi}(F_{B,N}^{(2)})+\mathcal{M}_{\phi}(F_{B,N}^{(3)})
\\
&= \left(\frac{2}{3}\right)^N
\left(\begin{bmatrix}
-2 & 0 & 0 \\
 & 1 & 0 \\
 &  & 1
\end{bmatrix}
+\delta_z
\right),\; \delta_z = -2\delta_1+\delta_2+\delta_3.
\end{aligned}
\end{equation}
and $\|\delta_x\|_{\max}\leq 2a$, $\|\delta_z\|_{\max}\leq 4a$.
Gershgorin's circles for $\mathcal{M}_{\Phi}(\tau_{P,N})$ are defined by their center $r_j$ (approximated by $r^\prime_j$ where $|r_j-r^\prime_j|\leq\left(\frac{2}{3}\right)^N\|\delta_z\|$) and radius $R_j$,
\begin{alignat*}{2}
&
r^\prime_1 = 1-2 \left(\frac{2}{3}\right)^N
&&\quad\quad
R_1 \leq R_{c1} = 2a
+\left(\frac{2}{3}\right)^N\left(2\|\delta_z\|\right)
+\sqrt{3}\left(\frac{2}{3}\right)^N(3\|\delta_x\|)
\\
&
r^\prime_2 = 1+\left(\frac{2}{3}\right)^N = r^\prime_3
&&\quad\quad
R_2, R_3 \leq R_{c2} = 2a
+\left(\frac{2}{3}\right)^N(2\|\delta_z\|)
+\sqrt{3}\left(\frac{2}{3}\right)^N\left(1+3\|\delta_x\|\right)
\\
&
r^\prime_4 = 1+2\left(\frac{2}{3}\right)^N
&&\quad\quad
R_4 \leq R_{c1}
\\
&
r^\prime_5 = 1-\left(\frac{2}{3}\right)^N = r^\prime_6
&&\quad\quad
R_5,R_6 \leq R_{c2}.
\end{alignat*}

Next, we need to verify that all Gershgorin's circles are on the plane $\{x|\text{Re}x>0\}$,
$$r_i-R_i>0,\forall i.$$
Firstly we bound the radius,
\begin{equation}
\begin{aligned}
R_i &\leq R_{c2}, \forall i,
\end{aligned}
\end{equation}
then we bound the center $r_i$,
\begin{equation}
\begin{aligned}
r_i
&\geq r^\prime_i - \left(\frac{2}{3}\right)^N\|\delta_z\|
\\
&>r^\prime_1 - \left(\frac{2}{3}\right)^N\|\delta_z\|,\forall i,
\end{aligned}
\end{equation}
and we can bound the leftmost point on all Gershgorin's circles for all $i$,
\begin{equation}
\begin{aligned}
r_i-R_i&>r^\prime_1 - R_{c2} - \left(\frac{2}{3}\right)^N\|\delta_z\|
\\
&= 1-\left(
2^{1-N}+\left(\frac{2}{3}\right)^N\left(2+\sqrt{3}\right)+\left(\frac{1}{3}\right)^N\left(12+6\sqrt{3}\right)
\right)
\\
&:= f(N).
\end{aligned}
\end{equation}
The function $f(N)$ is a linear combination of $1,1.5^{-N},3^{-N},2^{-N}$, with a positive coefficient only on $1$ and negative coefficients on the other components. This implies that $f(N)$ reaches its minimum in $\{N\in\mathbb{N}_+|N>5\}$ at $N=6$,
\begin{equation}
f(N) \geq 1 - 0.0313 - 0.3277 - 0.0308>0.
\end{equation}
According to Gershgorin's circle theorem, the eigenvalues of $\mathcal{M}_{\Phi}(\tau_{P,N})$ are in the union set $\{|x-r_1|\leq R_1\}\cup\cdots\cup\{|x-r_6|\leq R_6\}$.
As $r_i-R_i>0$, this set is on the plane $\{x|\text{Re}x>0\}$. Additionally, note that $\mathcal{M}_{\Phi}(\tau_{P,N})$ is self-conjugate and therefore all of its eigenvalues are real, i.e., on the line $\{x|\text{Im}x=0\}$.
Therefore, the eigenvalues of $\mathcal{M}_{\Phi}(\tau_{P,N})$ are all positive. Then the matrix is positive definite when $x=0$ for all $N>5$.
\end{proof}
By Lemmas~{\ref{lem:case_N_leq_5} and \ref{lem:case_N_greater_than_5}}, we have proven the theorem in the main text that the squashing model exists for the standard three-state protocol.

\section{Formulation of the numerical framework}
\label{sec:pcd}
The post-selection operator in this protocol is different from the one in \cite{Winick2018ReliableNK} because Bob's announcement depends on Alice's announcement in the three-state protocol, which is not allowed in the original framework.

In the following procedure, Bob's announcement is denoted as $s$ (either "pass" or "fail") and Alice's announcement is denoted as $r$ (either 0, 1, or 2). These announcements are stored in the classical system $\tilde B$ and $\tilde A$, respectively. Additionally, Alice's measurement output $(r,b)$ is jointly denoted as $q$ ($q \in \{w_0, w_1, w_2\}$ where $w_0 = \{00, 01\}$, $w_1 = \{10, 11\}$, and $w_2 = \{20, 21\}$), and Bob's measurement output is denoted as $p$ ($p\in \{0, 1, 2\}$). These outputs are stored in the classical system $\bar A$ and $\bar B$, respectively.

In step 3 of the protocol description in the main text, Alice and Bob make measurements and announce certain information. In step 4, Bob obtains the transmission value using the key map $g(p, r)$ as defined in Table \ref{tab:step4_mapping}, and this value is stored in the quantum register $R$. Therefore, the post-selection operator can be expressed as follows,
\begin{equation}\label{equ:post_selection}
\mathcal{G}(\rho) = V\Pi\mathcal{A}_B(\mathcal{A}_A(\rho))\Pi^\dagger V^\dagger,
\end{equation}
where we slightly abuse the notation $g(p,r)=s$, where $0,1$ is taken as "pass" and $X$ is taken as "fail". The operators $\mathcal{A}_A(\rho)$ and $\mathcal{A}_B(\rho)$ are defined as
\begin{equation}
\begin{aligned}
\mathcal{A}_A(\rho) &= \sum_{r\in\{0, 1, 2\}} K^A_r \rho {K^A_r}^\dagger \\
\mathcal{A}_B(\rho) &= \sum_{s\in\{\text{pass},\text{fail}\}} K^B_s\rho {K^B_s}^\dagger,
\end{aligned}
\end{equation}
where
\begin{equation}
\begin{aligned}
K^A_r &= \sum_{q \in w_r} \ket{\psi_{f(q_0, q_1)}}\bra{\psi_{f(q_0, q_1)}}_A \otimes \ket{r}_{\tilde{A}} \otimes \ket{q}_{\bar{A}} \\
K^B_s &= \sum_{\substack{r \in \{0, 1, 2\} \\ p \in \{p | g(p, r) = s\}}} (P_p)_B \otimes \ket{r}\bra{r}_{\tilde{A}} \otimes \ket{s}_{\tilde{B}} \otimes \ket{p}_{\bar{B}},
\end{aligned}
\end{equation}
and
\begin{equation}
\begin{aligned}
\Pi &= \sum_q \ket{\psi_{f(q_0, q_1)}}\bra{\psi_{f(q_0, q_1)}}_A \otimes \ket{\text{pass}}\bra{\text{pass}}_{\tilde{B}} \\
V &= \sum_{p, r} \ket{g(p, r)}_R \otimes \ket{r}\bra{r}_{\tilde{A}} \otimes \ket{p}\bra{p}_{\bar{B}}.
\end{aligned}
\end{equation}
With these definitions, the formulation of the operator $\mathcal{G}$ is complete.

Recalling that
\begin{equation}
f(\rho) = D(\mathcal{G}(\rho_{AB})\|\mathcal{Z}(\mathcal{G}(\rho_{AB}))),
\end{equation}
where $\mathcal{G}(\rho_{AB})$ has systems $RA\tilde A\bar A B \tilde B\bar B$ and $\mathcal{Z}(\mathcal{G}(\rho_{AB}))$ has systems $Z^R A\tilde A\bar A B \tilde B\bar B$. By the monotonicity of relative entropy (where our completely positive and trace-preserving operation is the partial trace over system $\tilde A\bar A\tilde B$), we have
\begin{equation}
D(\mathcal{G}(\rho_{AB})\|\mathcal{Z}(\mathcal{G}(\rho_{AB}))) \geq D(\rho_{RAB\bar B}\| \rho_{Z^RAB\bar B}).
\end{equation}
We consider a simplified version of $\mathcal{G}$ and $\mathcal{Z}$, where system $\tilde A\bar A\tilde B$ is traced out. Explicitly, $\rho_{RAB\bar B} = \sum_{i=1,2,3}K_i\rho K_i^\dagger$, where
\begin{equation}
\begin{aligned}
K_1 &=
(\ket{000}\bra{00}+\ket{101}\bra{01})_{A\to RA}
\otimes
\left(\sqrt{\frac{2}{3}}\ket{\overline{\psi_1}}\bra{\overline{\psi_1}}\otimes\ket{0}+\sqrt{\frac{2}{3}}\ket{\overline{\psi_2}}\bra{\overline{\psi_2}}\otimes\ket{1}\right)_{B\rightarrow B\bar{B}} \\
K_2 &=
(\ket{010}\bra{10}+\ket{111}\bra{11})_{A\to RA}
\otimes
\left(\sqrt{\frac{2}{3}}\ket{\overline{\psi_2}}\bra{\overline{\psi_2}}\otimes\ket{1}+\sqrt{\frac{2}{3}}\ket{\overline{\psi_3}}\bra{\overline{\psi_3}}\otimes\ket{2}\right)_{B\rightarrow B\bar{B}} \\
K_3 &=
(\ket{020}\bra{20}+\ket{121}\bra{21})_{A\to RA}
\otimes
\left(\sqrt{\frac{2}{3}}\ket{\overline{\psi_3}}\bra{\overline{\psi_3}}\otimes\ket{2}+\sqrt{\frac{2}{3}}\ket{\overline{\psi_1}}\bra{\overline{\psi_1}}\otimes\ket{0}\right)_{B\rightarrow B\bar{B}}. \\
\end{aligned}
\end{equation}
In step 4, measurements are performed on the quantum register $R$, which stores the final result, and the classical result $Z^R$ is obtained,
\begin{equation}\label{equ:key_map}
\rho_{Z^RAB\bar B}=\sum_{\{0,1\}} (\ket{j}\bra{j}_{R\to Z^R}\otimes I)\rho_{RAB\bar B} (\ket{j}\bra{j}_{R\to Z^R}\otimes I).
\end{equation}
By using the squashing model and considering the single-photon component of the coherent state source, the objective of the coherent state case is the same as that of the single-photon case.
\end{document}